\titleformat*{\section}{\Large\bfseries}
\titleformat*{\subsection}{\large\bfseries}
\newtheorem{theorem}{Theorem}[section]
\newtheorem{lemma}[theorem]{Lemma}
\newtheorem{corollary}[theorem]{Corollary}%
\theoremstyle{definition}
\newtheorem{definition}[theorem]{Definition}%
\newtheorem{remark}[theorem]{Remark}
\newtheorem{example}[theorem]{Example}
\newcommand\eat[1]{}
\newlength{\wordlength}
\newcommand{\pref}{\succ\xspace}
\newcommand{\level}{\text{\sc Lev}}
\newcommand{\range}[2]{\in\{#1,\ldots,#2\}}
\newcommand{\allitems}{\mathcal{M}}
\newcommand{\allagents}{\mathcal{N}}
\newcommand{\uall}{\mathcal{U}}
\newcommand{\udd}{\mathcal{U^{DD}}}
\newcommand{\uid}{\mathcal{U^{ID}}}
\newcommand{\ubin}{\mathcal{U^{BIN}}}
\newcommand{\rev}[1]{#1_{\textrm{rev}}}
\begin{document}

\title{Fair Allocation with Diminishing Differences
\footnote{A preliminary version of this paper was published in the proceedings of IJCAI 2017 \citep{segal2017fair}. In the present version, the proofs are substantially revised and simplified, omitted proofs are included, and there are two new sections: one on the division of chores (section \ref{sec:increasing}) and one on binary utilities (section \ref{sec:binary}).}
}
	
\author{Erel Segal-Halevi} 
\ead{Ariel University, Ariel 40700, Israel. erelsgl@gmail.com}
\author{Avinatan Hassidim} 
\ead{Bar-Ilan University, Ramat-Gan 5290002, Israel. avinatanh@gmail.com}
\author{Haris Aziz} 
\ead{UNSW Sydney and Data61 CSIRO, Australia. haris.aziz@unsw.edu.au}

\begin{abstract}
Ranking alternatives is a natural way for humans to explain their preferences. It is used in many settings, such as school choice, course allocations and residency matches. 
Without having any information on the underlying cardinal utilities, arguing about the fairness of allocations requires extending the ordinal item ranking to ordinal bundle ranking. The most commonly used such extension is stochastic dominance (SD), where a bundle X is preferred over a bundle Y if its score is better according to all additive score functions. SD is a very conservative extension, by which few allocations are necessarily fair while many allocations are possibly fair.
We propose to make a natural assumption on the underlying cardinal utilities of the players, namely that the difference between two items at the top is larger than the difference between two items at the bottom. This assumption implies a preference extension which we call diminishing differences (DD), where X is preferred over Y if its score is better according to all additive score functions satisfying the DD assumption. We give a full characterization of allocations that are necessarily\-/proportional or possibly-proportional according to this assumption. Based on this characterization, we present a polynomial-time algorithm for finding a necessarily\-/DD-proportional allocation whenever it exists.
Using simulations, we compare the various fairness criteria in terms of their probability of existence, and their probability of being fair by the underlying cardinal valuations. We find that necessary\-/DD-proportionality fares good in both measures. 
We also consider envy-freeness and Pareto optimality under diminishing-differences, as well as chore allocation under the analogous condition --- increasing-differences.
\end{abstract}

\begin{keyword}
\emph{JEL}: C62, C63, and C78
\end{keyword}

\maketitle

\section{Introduction}
\label{sec:intro}
Algorithms for fair assignment of indivisible items differ in the kind of information they require from the users.

Some algorithms require the users to rank bundles of items, i.e., report a total order among the bundles. Examples are the Decreasing Demand procedure of \citet{Herreiner2002Simple}, the Approximate-CEEI procedure of \citet{budish2011combinatorial} and the two-agent  Undercut procedure of  \citet{Brams2012Undercut,Aziz2015Note}. The computational and communicational burden might be large, since the number of bundles is exponential in the number of items.

Other algorithms require the users to evaluate individual items, i.e., supply a numeric monetary value for each item. Such algorithms are often termed \emph{cardinal}. They often assume that the users' valuations are additive, so that the value of a bundle can be calculated by summing the values of the individual items. Examples are the Adjusted Winner procedure of \citet{Brams2000WinWin}, the approximate-maximin-share procedure of \citet{Procaccia2014Fair} and
the Maximum Nash Welfare procedure of \citet{Caragiannis2016Unreasonable}.
In this setting, the communication is linear in the number of items, but the mental burden may still be large, since assigning an exact monetary value to individual items is not easy. This is especially true when items are valued for personal reasons (such as when dividing inheritance) and do not have a market price.



This paper focuses on a third class of algorithms, which only require the users to rank individual items, i.e., report a total order among items. Such algorithms are often termed \emph{ordinal}.

Ordinal algorithms are ubiquitous in mechanism design. They are often used in real-world applications, such as the National Residency Matching Program \citep{Roth1997NRMP,ashlagi2014stability}), school choice applications \citep{abdulkadiroglu2003school}, and university admittance \citep{hassidim2016strategic,hassidim2016redesigning}. One reason for this is that it is relatively easy for people to state ordinal preferences. 
Another reason is that in some legacy systems, 
the input procedure asks for ordinal preferences only.
Often, the designer can change the allocation mechanism, but cannot change the input procedure, as agents do not want to learn new ways to enter their input into the system.

Ordinal algorithms are also common in AI and in fair division. Examples are the AL two-agent procedure of \citet{Brams2013TwoPerson}, the optimal-proportional procedure of \citet{Aziz2015Fair}, picking-sequence procedures~\citep{Brams2004Dividing,Bouveret2011General} and the envy-free procedures of \citet{Bouveret2010Fair}. Such algorithms often assume that the agents' preferences are implicitly represented by an additive utility function, which is not known to the algorithm. 
This creates ambiguity in the agents' bundle rankings. For example, if an agent ranks four items as $w\succ x \succ y\succ z$, then, based on additivity, the algorithm can know that e.g. $\{w,x\}\succ \{y,z\}$ and $\{w,y\}\succ \{x,z\}$, but cannot know the relation between $\{w,z\}$ and $\{x,y\}$. Algorithms cope with this problem in several ways.

1. \textbf{necessary\-/fairness criteria}. An allocation is called \textbf{necessarily\-/fair} if it is fair for \emph{all} additive utility profiles consistent with the reported item-rankings. Here, ``fair'' may be substituted by any fairness criterion, such as envy-freeness or proportionality, as well as Pareto\-/efficiency.
Necessary fairness is a strong requirement, which is not always satisfiable.
For example, the AL procedure finds a necessarily\-/envy-free allocation, but only for two agents, and even then, it might need to discard some of the items.

2. \textbf{possible\-/fairness criteria}. An allocation is called \textbf{possibly-fair} if it is fair for \emph{at least one} additive utility profile consistent with the reported item-rankings. Again, ``fair'' may be substituted by proportional or envy-free or Pareto\-/efficient. Possible fairness is a weak criterion; algorithms that only return possibly-fair allocations might be considered unfair by users whose actual utility function is different.

3. \textbf{Scoring rules}. A scoring rule is a function that maps the rank of an item to a numeric score. A common example is the Borda scoring rule \citep{Young1974Borda}, where the least desired item has a score of 1, the next item has a score of 2, and so on. The score of a bundle is the sum of the scores of its items. It is assumed that all agents have the same scoring function. I.e., even though agents may rank items differently, the mapping from the ranking to the numeric utility function is the same for all agents \citep{Bouveret2011General,Kalinowski2013Social,Baumeister2016Positional,Darmann2016Proportional}. This strong assumption weakens the fairness guarantee. The allocation may appear unfair to agents whose actual scoring rule is different.

\subsection{Contribution}
The present paper suggests an alternative between the strong guarantee of necessary\-/fairness and the weak guarantee of possible\-/fairness and scoring-rule-fairness.

We assume that people are more sensitive about which of their high-valued items they receive than about which of their low-valued items they receive.
Specifically, we assume that the utility-difference between the best item and the second-best item is at least as large as the utility between the second-best and the third-best, and so on. We call this assumption \emph{Diminishing Differences (DD)}. The DD assumption is satisfied by the Borda scoring rule, as well as by many other scoring rules, as well as by lexicographic preferences.

The DD assumption is supported by a survey that was recently reported by 
\citet{bronfman2015assigning}
in the context of matching medical students to hospitals for internships:
\begin{quote}
``The students were asked to fill surveys, to assert the difference between the first and the
second place, the second and the third place and so on.
Based on the surveys' results, more weight was given to the difference between first and second place than to the difference between the ninth and the tenth.''
\end{quote}

Based on the DD assumption, we formalize several fairness criteria. We call an allocation \textbf{necessarily\-/DD-fair} (NDD-fair) if it is fair according to all additive utility profiles satisfying the DD assumption, and \textbf{possibly-DD-fair} (PDD-fair) if it is fair according to at least one additive utility profile satisfying the DD assumption. Again, ``fair'' may be substituted by envy-free or proportional or Pareto\-/efficient. The following implications are obvious for any fairness criterion:

\begin{center}
necessarily\-/fair $\implies$ NDD-fair $\implies$
 PDD-fair $\implies$ Possibly-fair
\end{center}

In other words, the DD-fairness criteria are intermediate in strength between necessary\-/fairness and possible\-/fairness.
A formal definition of these criteria appears in {Section \ref{sec:dd}}.

The first question of interest is to decide, given an item ranking and two bundles, whether the NDD or the PDD relation holds between these bundles. We prove characterizations of the NDD and PDD set relations that provide linear-time algorithms for answering these questions.
Using these algorithms, it can be decided in polynomial time whether a given allocation is NDD-proportional or NDD-envy-free ({Section \ref{sec:NDD}}).

Next, we prove a necessary and sufficient condition for the existence of an NDD-proportional (NDDPR) allocation. Essentially, an NDDPR allocation exists if and only if it is possible to:

(a) give all agents the same number of items, and 

(b) give each agent his best item. 

\noindent
The proof is constructive and presents a simple linear-time algorithm for 
finding an NDDPR allocation whenever it exists
({Section \ref{sec:NDDPR}}).

To appreciate the difference between NDD-fairness and necessary\-/fairness, contrast the above condition with the so-called ``Condition D'' of \citet{Brams2013TwoPerson}, which is necessary and sufficient for the existence of a necessarily\-/proportional (NecPR) allocation for two agents. For NecPR, it is required that for every odd integer $k \in\{ 1,3,\ldots,2l-1\}$ (where the number of items is $2 l$), the agents have a different set of $k$ best items; the condition for NDDPR is ``Condition D'' limited to $k=1$.

Intuitively, NDDPR allocations are more likely to exist than NecPR allocations. On the flip side, an NDDPR allocation is more likely to be considered unfair by some agents (whose utility functions do not satisfy the DD assumption) than a NecPR allocation. 
To assess the magnitude of these two opposing effects, we conduct a simple simulation experiment. 
We find that the former effect is substantial: with  randomly-generated utility functions (with partially-correlated utilities), 
NDDPR allocations exist in between 20\% and 40\% more instances than NecPR allocations.
In contrast, the latter effect is much less substantial: 
when there are sufficiently many items, our simple algorithm for finding NDDPR allocations almost always yields an allocation that is proportional according to the cardinal utilities. This indicates that NDDPR is appealing as a normative fairness criterion ({Section \ref{sec:simulations}}).


While our main interest is in NDD-proportionality, we briefly present several extensions of our model.

First, instead of proportionality, we study the stronger property of \emph{envy-freeness} (EF). Since
every EF allocation is PR, every NDDEF allocation is NDDPR. Therefore, conditions (a) and (b) above are still necessary to NDDEF existence. When there are $n=2$ agents, EF is equivalent to PR, so NDDPR is equivalent to NDDEF and conditions (a) and (b) are also sufficient, and when they are satisfied, an NDDEF allocation can be found in linear time.
EF and PR diverge when there are three or more agents. 
When $n=3$, we show that an NDDEF allocation might not exist even if conditions (a) and (b) hold.
We then study the computational problem of deciding whether an NDDEF allocation exists. Since conditions (a) and (b) are necessary for NDDEF, the decision problem is trivial whenever the number of items is not an integer multiple of the number of agents, since then condition (a) is violated. It is also trivial if the number of items equals the number of agents, since in this case condition (b) is both necessary and sufficient. Therefore the first non-trivial case is when the number of items is twice the number of agents. We prove that the decision problem is NP-hard already in this case ({Section \ref{sec:NDDEF}}).


Second, we study Pareto\-/efficiency (PE).  
The DD assumption has a substantial effect on fairness criteria: NDD-fair allocations are easier (in terms of existence) than necessary\-/fair allocations and PDD-fair allocations are harder than possibly-fair allocations. Interestingly, the DD assumption does not have this effect on PE. We show that NDD-PE is equivalent to necessarily\-/PE and PDD-PE is equivalent to possibly-PE. So the DD assumption does not lead to a new efficiency criterion ({Section \ref{sec:efficiency}}).

Third, we study the allocation of \emph{chores} --- items with negative utilities. We assume that people care more about \emph{not} getting the \emph{worst} chore than about getting the best chore; this naturally leads to the condition of \emph{increasing differences (ID)}. 
While the basic definitions and lemmas for the DD relations have exact analogues for the ID relations, 
our characterization for existence of NDDPR allocation of goods has no direct analogue for NIDPR allocation of chores ({Appendix \ref{sec:increasing}}).

Finally, we compare the Diminishing-Differences assumption to another natural assumption which we call \emph{Binary}.
It is based on the assumption that each agent only cares about getting as many as possible of his $k$ best items, where $k$ is an integer that may be different for different agents. 
We show that, while the number of utility functions that satisfy this assumption (for a given preference relation) is much smaller than the number of DD utility functions, it does not lead to new fairness criteria: necessary\-/binary-fairness is equivalent to necessary\-/fairness and possible\-/binary-fairness is equivalent to possible\-/fairness ({Appendix \ref{sec:binary}}).

\subsection{Related Work}
Extending preferences over individual items
to sets of items is a natural and principled way of succinctly encoding preferences \citep{BBP04a}.
One of the most common set extensions is \emph{stochastic dominance} (SD).
It was developed for a different but related problem --- extending preferences over individual outcomes to \emph{lotteries} over outcomes. If $X,Y$ are lotteries, then $X\succsim^{SD} Y$ iff $E[u(X)]\geq E[u(Y)]$ for every weakly-increasing utility function $u$ \citep{HaRu69a,Bran17a}. 
In the context of fair item allocation, SD leads to the notions of necessary\-/fairness and possible\-/fairness \citep{Aziz2015Fair}. 
Other common extensions are \emph{downward-lexicographic} (DL) and \emph{upward-lexicographic} (UL) \citep{Cho2012Probabilistic,Bouveret2010Fair,NBR15a}.


The diminishing-differences extension, which is the focus of this paper, is quite natural but has not been formalized in prior work.
The most similar extension that we are aware of is the \emph{second-order stochastic dominance} (SSD). If $X,Y$ are lotteries, then $X\succsim^{SSD} Y$ iff $E[u(X)]\geq E[u(Y)]$ for every utility function $u$ which is weakly-increasing and \emph{weakly-concave} \citep{HaRu69a}. In the context of item assignment, weak concavity is equivalent to \emph{increasing} differences --- agents care more about \emph{not} getting the \emph{worst} item than about getting the best item. Increasing differences make sense in fair division of chores \citep{Aziz2017Chores}. We analyze this assumption in Appendix \ref{sec:increasing}.

\citet{bronfman2015redesigning,bronfman2015assigning} 
present a mechanism for matching students to hospitals for internships, where the students report rankings over the hospitals. 
Initially, using simulations of random-serial-dictatorship, each student is assigned a vector of probabilities for each hospital. 
To improve the efficiency of the random assignment, probabilities are traded between students with different rankings. 
Ensuring that each trade is mutually beneficial requires an assumption on the students' cardinal utilities. Based on the survey quoted in the introduction, it is assumed that the utility each student assigns to each hospital is the square of its Borda score, which is a special case of a DD utility function.

Besides fair division, set extensions have been applied for committee voting~\citep{ALL16a} and social choice correspondences~(see e.g., \citep{BDS01a,KePe84a}).
Recently, set extensions have also been used in philosophic works on ethics. Suppose an ethical agent has to choose between several actions. 
He/she is unsure between two ethical theories, each of which ranks the actions differently. 
Due to this uncertainty about theories, each action 
can be considered a lottery. 
Using the SD set extension, \citet{aboodi2017intertheoretic} and \citet{tarsney2018exceeding} show that, in some cases, the agent can choose an ethically-best action despite the ethical uncertainty.

In social choice theory, it is common to study restricted domains of preference profiles, such as single-peaked, single-crossing or level-$r$-consensus \citep{Mahajne2015Level,nitzan2018flexible}. Many problems are much easier to solve in such restricted domains than in the domain of all preferences \citep{Elkind2014Detecting,ELP17a}. The present paper focuses on a restriction to preferences satisfying the DD assumption, which has not been studied so far.

Many works on fair allocation of indivisible items look for allocations that are only approximately-fair, for example, envy-free up to at most one item \citep{lipton2004approximately,budish2011combinatorial}. In contrast, we are interested in allocations that are fair without approximations. Naturally, such allocations do not always exist, so we are interested in finding conditions under which they exist.

\section{Preliminaries}
\label{sec:model}
There is a set $\allagents{}$ of agents with $n=|\allagents|$. There is a set $\allitems$ of distinct items with $m=|\allitems|$.
A \emph{bundle} is a set of items. A \emph{multi-bundle} is a multi-set of items, i.e., it may contain several copies of the same item.\footnote{
Multi-bundles are used mainly as a technical tool during the proofs; our primary results concern simple bundles, that contain (at most) a single copy of each item.
}

An \emph{allocation} $\bf{X}$ is a function that assigns to each agent $i$  a bundle $X_i$, such that $\allitems = X_1 \cup \cdots \cup X_n$ and the $X_i$-s are pairwise-disjoint.

Each agent $i\in\allagents$ has a strict ranking $\pref_i$ on items.
Each agent may also have a utility function $u_i$ on (multi-)bundles.
When we deal with a single agent, we often omit the subscript $_i$ and consider an agent with ranking $\succ$ and utility-function $u$.

All utility functions considered in this paper are strictly positive and additive, so the utility of a (multi-)bundle is the sum of the utilities of the items in it.
A utility function $u$ is \emph{consistent with $\pref$} if for every two items $x,y$: 
\begin{align*}
u(\{x\}) > u(\{y\})
\iff
x\pref y
\end{align*}

We denote by $\mathcal{U}(\pref)$ the set of additive utility functions consistent with $\pref$.

Given a vector of $n$ rankings $\succ_1,\ldots,\succ_n$, we denote 
by $\mathcal{U}(\succ_1,\ldots,\succ_n)$ the set of vectors of additive utility functions $\mathbf{u} = (u_1,\ldots,u_n)$ such that for all $i\in\allagents$, $u_i$ is consistent with $\pref_i$.

The following definition is well-known (see for example
\citet{Aziz2015Fair}):
\begin{definition}
\label{def:sd}
Given a ranking $\succ$ and two (multi-)bundles $X,Y$:
\begin{align*}
X \succsim^{Nec} Y
&&
\iff
&&
\forall u\in \uall(\pref) \text{:~~~~~} u(X)\geq u(Y).
\\
X \succsim^{Pos} Y
&&
\iff
&&
\exists u\in \uall(\pref) \text{:~~~~~} u(X)\geq u(Y)
\end{align*}
\end{definition}

Given a strict ranking $\succ$, we assign to each item $x\in\allitems$ a \emph{level}, denoted $\level(x)$, such that the level of the best item is $m$, the level of the second-best item is $m-1$, etc. 
(this is also known as the \emph{Borda score} of the item).
We define the level of a multi-bundle as the sum of the levels of the items in it:
\begin{align*}
\level(X) := \sum_{x\in X} \level(x)
\end{align*}
where all copies of the same item have the same level.

In this work we assume that the agents truthfully report their rankings to the algorithm; we leave the issue of strategic manipulations to future work.

\section{The Diminishing-Differences Property}
\label{sec:dd}
We define our new concept of diminishing differences (DD) in three steps: first, we define the set of DD utility functions (Definition \ref{def:dd-utilities}). 
Based on this, we define the necessary\-/DD and possible\-/DD relations (Definition \ref{def:dd-relation}). 
Based on this, we define the NDD-fairness and PDD-fairness criteria (Definition \ref{def:dd-prop}).

\begin{definition}
\label{def:dd-utilities}
Let $\succ$ be a preference relation and
$u$ a utility function consistent with $\succ$.
We say that
$u$ has the \emph{Diminishing Differences (DD)} property if, for every three items with consecutive levels $x_3\succ x_2\succ x_1$ such that $\level(x_3)=\level(x_2)+1=\level(x_1)+2$,
it holds that $u(x_3)-u(x_2)\geq u(x_2)-u(x_1)$.

We denote by $\udd(\pref)$ the set of all DD utility functions consistent with $\pref$.

Given $n$ rankings $\pref_1,\ldots,\pref_n$, 
We denote by $\udd(\pref_1,\ldots,\pref_n)$ the set of all vectors of DD utility functions $\mathbf{u} = (u_1,\ldots,u_n)$, such that for all $i\in\allagents$, $u_i$ is consistent with $\pref_i$.
\end{definition}
The Borda utility function consistent with $\succ$ is a member of $\udd(\succ)$. Another member is the lexicographic utility function $Lex(x) := 2^{\level(x)}$, by which bundles are ordered by whether they contain the best item, then by whether they contain the second-best item, etc. 

An alternative characterization of $\udd$ is given by the following lemma.
\begin{lemma}
\label{lem:u-in-DD}
$u\in \udd(\pref)$ iff, for every four items $x_2, y_2, x_1, y_1$ with $x_2\succsim x_1$ and $y_2\succsim y_1$
and $x_2\neq y_2$ and $x_1\neq y_1$:
\begin{align}
\tag{*}
\frac{u(x_2)-u(y_2)}{\level(x_2)-\level(y_2)}
\geq
\frac{u(x_1)-u(y_1)}{\level(x_1)-\level(y_1)}
\end{align}
\end{lemma}
\begin{proof}
\textbf{DD $\implies$ (*)}:
Let $k:=|\level(x_2)-\level(y_2)|$. 
Then there are $k+1$ items whose level is between $x_2$ and $y_2$ (inclusive). Denote these items by $z_j$ for $j\in\{0,\ldots,k\}$, such that  $z_{k} \succ z_{k-1} \succ \ldots \succ z_0$, and either $z_k=x_2,z_0=y_2$ (if $x_2\succ y_2$) or vice versa: $z_k=y_2,z_0=x_2$ (if $y_2\succ x_2$).
Then, the left-hand side of (*) can be written as:
\begin{align*}
\frac{
u(z_k) - u(z_0)
}{
k
}
=
\frac{
\sum_{j=1}^{k} 
\big[
u(z_j) - u(z_{j-1})
\big]
}{
k
}
\end{align*}
This is an arithmetic mean of the $k$ differences $u(z_j) - u(z_{j-1})$, for $j\in\{1,\ldots,k\}$.

Similarly, let $k':=|\level(x_1)-\level(y_1)|$. 
The right-hand side of (*) is an arithmetic mean of $k'$ utility-differences of items with level between $x_1$ and $y_1$.

By assumption $x_2\succsim x_1$ and $y_2\succsim y_1$, so by DD, to each difference in the left-hand side corresponds a weakly-smaller difference in the right-hand side. Therefore, the arithmetic mean in the left-hand side is weakly larger.

\textbf{(*) $\implies$ DD}: in (*), let $y_2$ be the element ranked immediately below $x_2$, let $x_1=y_2$, and let $y_1$ be the element ranked immediately below $x_1$. Then the denominators both equal 1, and $u$ satisfies the DD definition.
\end{proof}

\begin{definition}
\label{def:dd-relation}
Given a ranking $\succ$ and two (multi-)bundles $X,Y$:
	\begin{align*}
	X \succsim^{NDD} Y
	&&
	\iff
	&&
	\forall u\in \udd(\pref) \text{:~~~~~} u(X)\geq u(Y)
	\\
	X \succsim^{PDD} Y
	&&
	\iff
	&&
	\exists u\in \udd(\pref) \text{:~~~~~} u(X)\geq u(Y)
	\end{align*}
\end{definition}

\begin{remark}
\label{rem:dd-sd}
Comparing Definitions \ref{def:sd} and \ref{def:dd-relation}, it is clear that:
\begin{align*}
X \succsim^{Nec} Y
\implies
X \succsim^{NDD} Y
\implies
X \succsim^{PDD} Y
\implies
X \succsim^{Pos} Y
\end{align*}
\end{remark}
We now define the main fairness criterion 
that we will investigate in this paper --- proportionality.
\begin{definition}
\label{def:dd-prop}
Given a vector $\mathbf{u}$ of utility functions, an allocation $\bf{X}$ is called \emph{proportional for $\mathbf{u}$} if $\forall i\in\allagents:~~ n\cdot u_i(X_i) ~\geq~ u_i(\allitems)$. 
\\
Given item rankings $\succ_1,\ldots,\succ_n$, an allocation $\bf{X}$ is called:
\begin{itemize}
\item  \emph{necessary\-/DD-proportional (NDDPR)} if it is proportional for all $\mathbf{u}\in \udd(\pref_1,\ldots,\pref_n)$.
\item \emph{possible\-/DD-proportional (PDDPR)} if it is proportional for at least one $\mathbf{u}\in \udd(\pref_1,\ldots,\pref_n)$.
\end{itemize}
For comparison, recall that an allocation $\bf{X}$ is called:
\begin{itemize}
\item  \emph{necessarily\-/proportional (NecPR)} if it is proportional for all $\mathbf{u}\in\mathcal{U}(\pref_1,\ldots,\pref_n)$.
\item \emph{possibly-proportional (PosPR)} if it is proportional for at least one $\mathbf{u}\in\mathcal{U}(\pref_1,\ldots,\pref_n)$.
\end{itemize}
\end{definition}
Like in Remark \ref{rem:dd-sd}, it is clear that
necessarily\-/proportionality implies NDD-proportionality implies PDD-proportionality implies possibly-proportionality.

We now give alternative characterizations of NDDPR and PDDPR in terms of the NDD and PDD relations. 
For every integer $k$ and bundle $X_i$, define $k\cdot X_i$ as the multi-bundle in which each item of $X_i$ is copied $k$ times. Proportionality can be defined by comparing, for each agent $i$, the bundle $X_i$ copied $n$ times, to the bundle of all items $\allitems$.
\begin{lemma}
\label{lem:dd-prop}
Given item rankings $\succ_1,\ldots,\succ_n$:

(a) An allocation $\bf{X}$ is NDDPR iff $\forall i\in\allagents:~ n\cdot X_i~\succsim_i^{NDD}~\allitems$.

(b) An allocation $\bf{X}$ is PDDPR iff
$\forall i\in\allagents:~ n\cdot X_i~\succsim_i^{PDD}~\allitems$.
\end{lemma}
\begin{proof}
Let $P(i,\mathbf{u})$ be the proportionality predicate ``$n\cdot u_i(X_i) ~\geq~ u_i(\allitems)$''.

(a)
The NDDPR definition is
``For all DD utility profiles $\mathbf{u}$, for all agents $i$, $P(i,\mathbf{u})$.''
The right-hand side is ``
For all agents $i$, 
for all DD utility profiles $\mathbf{u}$, 
$P(i,\mathbf{u})$.''
These statements are logically equivalent for any predicate $P$.

(b) The PDDPR definition is ``There exists a DD utility profile $\mathbf{u}$ for which, for all agents $i$, $P(i,\mathbf{u})$.''
The right-hand side is: ``For all agents $i$, 
there exists a DD utility profiles $\mathbf{u}$ such that
$P(i,\mathbf{u})$.''

The former definition logically implies the latter (for any predicate $P$). 
It remains to prove that the latter implies the former. 
Indeed, suppose that for every agent $i\in\allagents$, there exists $u_i\in\udd(\succ_i)$ such that $u_i(n\cdot X_i)\geq u_i(\allitems)$. By additivity, $u_i(n\cdot X_i) = n\cdot  u_i(X_i)$, so for every $i$, $n \cdot u_i(X_i)\geq u_i(\allitems)$. Therefore the allocation $\bf{X}$ is proportional by the profile $(u_1,\ldots,u_n)\in \udd(\succ_1,\cdots,\succ_n)$.
\end{proof}

\section{Characterizing NDD and PDD Relations}
\label{sec:NDD}
As a first step in finding DD-fair allocations among many agents, 
we study the NDD and PDD relations for a single agent. We are given a preference relation  $\succ$ on items and two multi-bundles $X,Y$, and have to decide whether $X \succsim^{NDD} Y$ and/or $X \succsim^{PDD} Y$.

We begin by proving a convenient characterization of the NDD relation. For the characterization, 
we order the items in each multi-bundle by decreasing level, so  $X=\{x_{-1},\ldots, x_{-|X|}\}$ where $x_{-1}\succsim \ldots \succsim x_{-|X|}$
(the order between different copies of the same item is arbitrary).%
\footnote{We use negative indices so that the order of indices is the same as the order of levels.}
For each $k\leq |X|$ we define $X^{-k}$ as the $k$ best items in $X$, i.e., $X^{-k}:=\{x_{-1},\ldots,x_{-k}\}$.

\begin{theorem}
\label{thm:X-NDD-Y}
Given a ranking $\succ$ and two (multi-)bundles $X,Y$, 
$X\succsim^{NDD} Y$ if and only if both of the following conditions hold:
\begin{enumerate}
\item $|X|\geq |Y|$ and
\item for each $k\in \{1,\ldots, |Y|\}$:
$\level(X^{-k}) \geq  \level(Y^{-k})$.
\end{enumerate}
\end{theorem}
Theorem \ref{thm:X-NDD-Y} implies that there is a polynomial-time algorithm to check whether $X\succsim^{NDD} Y$; 
see Algorithm~\ref{algo:comparedd}.

\begin{algorithm}[h]
\caption{Checking the $\succsim^{NDD}$ relation}
\label{algo:comparedd}
\centering
\renewcommand{\algorithmicrequire}{{\textbf{Input}:}}
\renewcommand{\algorithmicensure}{{\textbf{Output}:}}
 \begin{algorithmic}
\REQUIRE $X,Y\subset \allitems$, and a ranking $\succ$ of the items in $\allitems$.
\ENSURE  Yes if $X\succsim^{NDD} Y$; No otherwise.
 \end{algorithmic}
\algsetup{linenodelimiter=\,}
\begin{algorithmic}
	\IF{$|X|<|Y|$}
	\RETURN No  
	~~~~~~
	\COMMENT{condition (i) is violated}
	\ENDIF
	\STATE  Order the items in $X$ and $Y$ by decreasing order of preference, such that $x_{-1} \succsim \cdots \succsim  x_{-|X|}$ and $y_{-1} \succsim \cdots \succsim y_{-|Y|}$.
\STATE Initialize TotalLevelDiff$:= 0$.
\FOR{$j=1,\ldots,|Y|$}
\STATE LevelDiff := $[\level(x_{-j}) - \level(y_{-j})]$
\STATE TotalLevelDiff += LevelDiff
\IF{TotalLevelDiff $< 0$}
\RETURN No
	~~~~~~
	\COMMENT{condition (ii) is violated}
\ENDIF
\ENDFOR
 \RETURN Yes
\end{algorithmic}
\label{algo:main}
\end{algorithm}

\begin{remark}
\label{thm:X-Nec-Y}
Contrast this characterization with the following characterization of $\succsim^{Nec}$ from  \citet{Aziz2015Fair}. $X\succsim^{Nec} Y$ iff:
\begin{enumerate}
\item $|X|\geq |Y|$ and
\item for each $k\in \{1,\ldots, |Y|\}$:
$\level(x_{-k}) \geq \level(y_{-k})$.
\end{enumerate}
\end{remark}

Before proving Theorem \ref{thm:X-NDD-Y}, we give some examples.

\begin{example}
Suppose the set of items is $\allitems=\{1,\ldots,8\}$
and we are given a preference-relation $8\succ\cdots \succ 1$, so that each item is represented by its level.
Consider the following two bundles:
\begin{center}
$~~~~~X = \{8,4,2\}~~~~~$$~~~~~Y = \{7,6\}~~~~~$
\end{center}
Note that $|X|>|Y|$, $X$ is lexicographically-better than $Y$, and even the Borda score of $X$ is higher. However, the level of $X^{-2}$ (the two best items in $X$) is only $12$ while the level of $Y^{-2}$ is $13$. Hence, by Theorem \ref{thm:X-NDD-Y},  $X\not \succsim^{NDD} Y$. Indeed, $X$ is not better than $Y$ according to the DD utility function $u_{square}(x) := \level(x)^2$, since $u_{square}(X) = 84 < 85 = u_{square}(Y)$. 
\end{example}
\begin{example}
Consider the following two bundles:
\begin{center}
$~~~~~Z = \{8,5\}~~~~~$$~~~~~Y = \{7,6\}~~~~~$
\end{center}
Now the conditions of Theorem \ref{thm:X-NDD-Y} are satisfied: $|Z|\geq |Y|$ and $\level(Z^{-1})\geq \level(Y^{-1})$ and $\level(Z^{-2})\geq \level(Y^{-2})$. Hence the theorem implies that $Z\succsim^{NDD} Y$.
In contrast, condition (ii) in Remark \ref{thm:X-Nec-Y} is not satisfied since $\level(z_{-2}) < \level(y_{-2})$. 
Therefore $Z\not \succsim^{Nec} Y$.
Indeed, $Z$ is worse than $Y$ by some non-DD utility functions, for example, by $u_{sqrt}(x):=\sqrt{\level(x)}$,
since $u_{sqrt}(Z)\approx 5.06 < 5.09 \approx u_{sqrt}(Y)$.
\end{example}

\begin{proof}[Proof of Theorem \ref{thm:X-NDD-Y}]
~

\textbf{NDD $\implies$ (i) and (ii)}:
We assume that either \emph{(i)} or \emph{(ii)} is violated and 
prove that $X\not\succsim^{NDD} Y$, i.e., there is a utility function $u \in \udd(\succ)$ such that $u(X) < u(Y)$.

\begin{enumerate}
\item If \emph{(i)} is violated then $|Y|>|X|$. Define $u$ as:
\begin{align*}
u(z) := m|Y| + \level(z) && \text{~for~all~} z\in\allitems
\end{align*}
It has diminishing-differences since the difference in utilities between items with adjacent ranks is 1. 

The term $m|Y|$ is so large that the utility of a bundle is dominated by its cardinality. Formally, for every item $x$, $m|Y| < u(x) \leq m+m|Y|$, so:
 	\begin{align*}
 	u(X) &\leq |X|\cdot (m+m|Y|)
 	\\
 		   &<   m|Y| + |X|\cdot m|Y| && \text{since $|X|<|Y|$}
 	\\
 		   &=  (|X|+1)\cdot m|Y|
 	\\
 		   &\leq  |Y|\cdot m|Y|       && \text{since $|X|<|Y|$}
 	\\
 		   &<  u(Y)
 	\end{align*}
Hence $X\not\succsim^{NDD} Y$.

\item If \emph{(ii)} is violated then for some $k\geq 1$, $\level(Y^{-k}) > \level(X^{-k})$. Let $k$ be the smallest integer that satisfies this condition; hence $y_{-k} \succ x_{-k}$.
Let $C := \level(x_{-k})-1$ and define $u$ as:
\begin{align*}
u(z) := 
\begin{cases}
\level(z) & \text{for $z\prec x_{-k}$}
\\
[\level(z) - C]\cdot m|X| & \text{for $z\succsim x_{-k}$}
\end{cases}
\end{align*}
so the utilities of the items worse than $x_{-k}$ are $1, 2, \ldots, C$,
and the utilities of $x_{-k}$ and the items better than it are $m |X|, 2 m|X|, 3 m|X|, \ldots$.

This $u$ has diminishing-differences, since the difference in utilities between adjacent items ranked weakly above $x_{-k}$ is $m|X|$, the difference between $x_{-k}$ and the next-worse item is less than $m|X|$ and more than 1, and the difference between adjacent items ranked below $x_{-k}$ is 1. 

The term $m|X|$ is so large that the utility of a bundle is dominated by the level of its items that are weakly better than $x_{-k}$. Formally:
\begin{align*}
u(X) &= u(\{x_{-1},\ldots,x_{-k}\}) + u(\{x_{-(k+1)},\ldots,x_{-|X|}\})
\\
&=  m|X|\cdot [\level(\{x_{-1},\ldots,x_{-k}\}) - k\cdot C] 
+  \level(\{x_{-(k+1)},\ldots,x_{-|X|}\})
\end{align*}
The assumption $\level(Y^{-k})>\level(X^{-k})$ implies that 
$\level(X^{-k})\leq \level(Y^{-k}) -1$.
Hence 
the leftmost term is at most $m|X|\cdot [\level(\{y_{-1},\ldots,y_{-k}\}) - 1 - k\cdot C]$. Since the level of an item is at most $m$, the rightmost term is less than $m|X|$. Hence:
\begin{align*}
	u(X) &< m|X|\cdot [\level(\{y_{-1},\ldots,y_{-k}\}) - 1 - k\cdot C]
	       +  m|X|
	       \\
	       &=  m|X|\cdot [\level(\{y_{-1},\ldots,y_{-k}\}) - k\cdot C]
	       \\
	       & = u(Y^{-k}) \text{~~~~~~~~~~since~} y_{-1},\ldots,y_{-k}\succ x_{-k}
	       \\
	       &\leq u(Y).
\end{align*}
Hence $X\not\succsim^{NDD} Y$.
\end{enumerate}

\textbf{(i) and (ii)$\implies$NDD:}
~~~We assume that $|X|\geq |Y|$ and 
that $\forall k\range{1}{|Y|}: \level(X^{-k})\geq \level(Y^{-k})$.
We consider an arbitrary utility function $u\in\udd(\succ)$ and prove that
$\forall k\range{1}{|Y|}: u(X^{-k})\geq u(Y^{-k})$. This will imply that $u(X)\geq u(Y)$, so that $X\succsim^{NDD} Y$.

During the proof, we assume that for every $j\range{1}{|Y|}$: $x_{-j}\neq y_{-j}$. 
This does not lose generality, since if for some $j$ we have $x_{-j} = y_{-j}$, we can just remove this item from both $X$ and $Y$; this changes neither the assumptions nor the conclusion.

In the proof, we use the following notation.
\begin{itemize}
\item $l_{k} := \level(x_{-k}) - \level(y_{-k})$.
\item $L^{k} := \level(X^{-k}) - \level(Y^{-k}) = \sum_{j=1}^k l_{k}$.
\item $u_{k} := u(x_{-k}) - u(y_{-k})$.
\item $r_{k} := u_{k} / l_{k}$.
\item $U^{k} := u(X^{-k}) - u(Y^{-k}) = \sum_{j=1}^k u_{k} = \sum_{j=1}^k r_{k} l_{k}$.
\end{itemize}
In this notation, our assumptions are that
$\forall k\range{1}{|Y|}: l_k\neq 0 \text{ and } L^{k}\geq 0$. We have to prove that $\forall k\range{1}{|Y|}: U^{k}\geq 0$.

Suppose we walk on the graph of $L^{k}$ (see Figure \ref{fig:X-NDD-Y}). 
When we move from $L^{j-1}$ to $L^{j}$, we make $l_j$ steps (upwards if $l_j> 0$ or downwards if $l_j<0$).
By assumption, the graph is always above zero.
Hence, 
an earlier upwards step corresponds to every downwards step.

Suppose we walk simultaneously on the graph of $U^{k}$. 
When we move from $U^{j-1}$ to $U^{j}$, we make a step of size $u_j = r_j l_j$, or equivalently, $l_j$ steps of size $r_j$ (upwards if $l_j> 0$ or downwards if $l_j<0$). Hence, to every step of size $1$ on the graph of $L^k$ corresponds a step of size $r_j$ on the graph of $U^k$ (see Figure \ref{fig:X-NDD-Y}).
\begin{figure}
\begin{center}
\includegraphics[width=12cm]{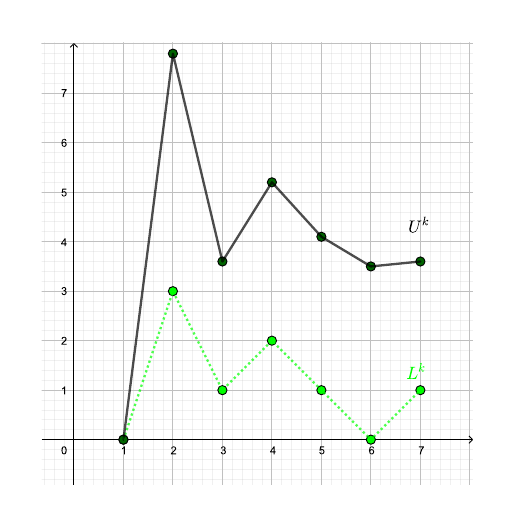}
\end{center}
\vskip -1cm
\caption{
\label{fig:X-NDD-Y}
An illustration of the graphs of $L^k$ and $U^k$ in the proof of Theorem \ref{thm:X-NDD-Y}.
}
\end{figure}

Now, we claim that $r_k$ is a weakly-decreasing function of $k$. Particularly, we claim that $i<j$ implies $r_i\geq r_j$.
To prove the claim we apply Lemma \ref{lem:u-in-DD}.
Since $u\in\udd(\succ)$, the lemma is applicable to $u$.
Since $i<j$, we have 
$x_{-i} \succsim x_{-j}$
and
$y_{-i} \succsim y_{-j}$.
By assumption, we have
$x_{-j} \neq y_{-j}$
and
$x_{-i} \neq y_{-i}$.
Therefore, the lemma implies:
\begin{align*}
&
\frac{u(x_{-i})-u(y_{-i})}{\level(x_{-i})-\level(y_{-i})}
\geq
\frac{u(x_{-j})-u(y_{-j})}{\level(x_{-j})-\level(y_{-j})}
\\
\iff
&
u_i / l_i \geq u_j / l_j
\\
\iff
&
r_i \geq r_j.
\end{align*}
Hence, to every step downwards of size $r_j$ on the graph of $U^k$ corresponds an earlier step upwards, and its size is at least $r_j$. 

Therefore, the graph of $U^{k}$, too, always remains above 0.
\end{proof}

Our next theorem gives an analogous characterization of the PDD relation.
\begin{theorem}
\label{thm:X-PDD-Y}
Given a ranking $\succ$ and two (multi-)bundles $X,Y$, 
$Y\succsim^{PDD} X$ if and only if at least one of the following conditions hold:
\begin{enumerate}
\item $|Y| > |X|$, or
\item for some $k\in \{1,\ldots, |Y|\}$:
$\level(Y^{-k}) > \level(X^{-k})$, or
\item $\level(Y) \geq \level(X)$.
\end{enumerate}
\end{theorem}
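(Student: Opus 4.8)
The plan is to prove both implications by leaning on the already-established characterization of $\succsim^{NDD}$ (Theorem \ref{thm:X-NDD-Y}) together with the fact that the Borda utility $\level(\cdot)$ is itself a member of $\udd(\succ)$.

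For the ``if'' direction I would observe that conditions (1) and (2) are precisely the negations of conditions (i) and (ii) in Theorem \ref{thm:X-NDD-Y}. Hence, if either of them holds, then $X\not\succsim^{NDD}Y$, and the \emph{only-if} part of Theorem \ref{thm:X-NDD-Y} already exhibits an explicit $u\in\udd(\succ)$ with $u(X)<u(Y)$; this same $u$ witnesses $Y\succsim^{PDD}X$. For condition (3) I would simply take $u=\level$, which lies in $\udd(\succ)$ by definition, and note that $\level(Y)\geq\level(X)$ means $u(Y)\geq u(X)$, again witnessing $Y\succsim^{PDD}X$. Thus any one of the three conditions suffices.

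For the ``only-if'' direction I would argue the contrapositive: assume all three conditions fail, i.e. $|X|\geq|Y|$, and $\level(X^{-k})\geq\level(Y^{-k})$ for every $k\range{1}{|Y|}$, and $\level(X)>\level(Y)$; I then prove that \emph{every} $u\in\udd(\succ)$ satisfies $u(X)>u(Y)$, so that no $u$ can witness $Y\succsim^{PDD}X$. The first two assumptions are exactly conditions (i)--(ii) of Theorem \ref{thm:X-NDD-Y}, so its \emph{if} part yields the weak inequality $u(X)\geq u(Y)$ for all such $u$. The remaining work, which I expect to be the main obstacle, is to upgrade this to a strict inequality using the extra assumption $\level(X)>\level(Y)$.

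I would carry out this strictness step by a perturbation argument. Suppose for contradiction that $u(X)=u(Y)$ for some $u\in\udd(\succ)$, and set $u_t := u - t\cdot\level$ for a small $t>0$. Because consecutive-level differences of $\level$ are all equal to $1$, the defining DD expression $\level(x_3)-2\level(x_2)+\level(x_1)$ vanishes, so $u_t$ still satisfies every DD inequality; and for $t$ smaller than the finite positive minimum of $\bigl(u(a)-u(b)\bigr)/\bigl(\level(a)-\level(b)\bigr)$ over ranked pairs $a\succ b$ (and smaller than $\min_x u(x)/\level(x)$), the function $u_t$ remains strictly consistent with $\succ$ and positive, hence $u_t\in\udd(\succ)$. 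But then
\[
u_t(X)-u_t(Y)=\bigl(u(X)-u(Y)\bigr)-t\bigl(\level(X)-\level(Y)\bigr)=-t\bigl(\level(X)-\level(Y)\bigr)<0,
\]
which contradicts $u_t(X)\geq u_t(Y)$, the latter holding since $X\succsim^{NDD}Y$ and $u_t\in\udd(\succ)$. Therefore no $u$ gives equality, so $u(X)>u(Y)$ for all $u\in\udd(\succ)$, completing the contrapositive. As an alternative to this perturbation, the same strict inequality can be extracted directly from the $L^k$/$U^k$ machinery of the \emph{if} part of Theorem \ref{thm:X-NDD-Y}: summation by parts gives $U^{|Y|}\geq r_{|Y|}\,L^{|Y|}$, and since the ratios $r_k$ are positive while $L^{|Y|}=\level(X)-\level(Y)>0$, this quantity is strictly positive.
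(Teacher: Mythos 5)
Your proof is correct, and its ``if'' direction coincides with the paper's (reuse the explicit witnesses constructed in the only-if part of Theorem \ref{thm:X-NDD-Y} when condition (i) or (ii) of this theorem holds, and take $u=\level$ when (iii) holds); but your ``only-if'' direction takes a genuinely different route. The paper re-enters the $L^k$/$U^k$ walk machinery from the if-part of Theorem \ref{thm:X-NDD-Y}: it argues that $\widehat{(iii)}$ makes the level-walk end strictly above zero, so some upward step is never cancelled, hence the utility-walk also ends strictly above zero. You instead use Theorem \ref{thm:X-NDD-Y} purely as a black box to obtain the weak inequality $u(X)\geq u(Y)$ for every $u\in\udd(\succ)$, and then upgrade it to a strict one by the perturbation $u_t=u-t\,\level$: since $\level$ has vanishing second differences along consecutive ranks, $u_t$ retains the DD property, and for sufficiently small $t>0$ it stays consistent with $\succ$ and positive, so an equality $u(X)=u(Y)$ would force $u_t(X)<u_t(Y)$, contradicting $X\succsim^{NDD}Y$. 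This buys you two things: you never re-open the internals of the NDD characterization, and you sidestep a small imprecision in the paper's sketch --- when $|X|>|Y|$, the walk's endpoint is $L^{|Y|}=\level(X^{-|Y|})-\level(Y)$, which need not equal $\level(X)-\level(Y)$ and can be exactly zero (e.g.\ $X=\{5,2,1\}$, $Y=\{4,3\}$ by levels), so ``the graph ends strictly above zero'' only becomes true after extending the walk past $k=|Y|$, where strictness really comes from the positive utility of the surplus items of $X$. Be aware that your closing Abel-summation remark inherits exactly this imprecision, since it identifies $L^{|Y|}$ with $\level(X)-\level(Y)$; that identity holds only when $|X|=|Y|$, and for $|X|>|Y|$ you would need to add the positivity of $u$ on the extra items. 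As that remark is offered only as an alternative, your main argument is complete and rigorous as it stands.
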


\begin{proof}
~

\textbf{(i) or (ii) or (iii)~~$\implies$~~PDD}:
~~~If (i) holds, then $u(Y)\geq u(X)$ by the DD function $u(z)$ in the proof of Theorem \ref{thm:X-NDD-Y}(i).
Similarly, if 
(ii) holds, then $u(Y)\geq u(X)$ by the DD function $u(z)$ in the proof of Theorem \ref{thm:X-NDD-Y}(ii).

If (iii) holds, then $u(Y)\geq u(X)$ by the DD function $u(z) := \level(z)$.

\textbf{PDD~~$\implies$~~(i) or (ii) or (iii)}:
We assume that none of the three conditions holds, and prove that $Y\not\succsim^{PDD} X$. So we have:

$\widehat{(i)}$ $|X|\geq |Y|$, and

$\widehat{(ii)}$ $\forall k\range{1}{|Y|}: \level(X^{-k})\geq \level(Y^{-k})$, and

$\widehat{(iii)}$ $\level(X) > \level(Y)$.

We consider an arbitrary function $u\in\udd(\succ)$, and show that $u(X) > u(Y)$. 
During the proof, we denote $K:=|Y|$.

We use the notation of the proof of Theorem \ref{thm:X-NDD-Y}.
By conditions $\widehat{(i)}$ and $\widehat{(ii)}$, the graph of $L^{k}$ is always weakly above zero. Hence, to every step downwards corresponds an earlier step upwards.
As in the proof of Theorem \ref{thm:X-NDD-Y}, the graph of $U^k$ is always weakly above zero, so $\forall k\range{1}{K}: u(X^{-k})\geq u(Y^{-k})$.
Now we consider two cases.

\emph{Case \#1}: the graph of $L^k$ ends strictly above zero. 
Hence, there exists a step upwards with no corresponding step downwards. 
Therefore the graph of $U^{k}$, too, ends  strictly above zero.
Therefore, we have $u(X^{-K}) > u(Y^{-K})$.
Since $u(X) \geq u(X^{-K})$ and $Y^{-K} = Y$, we get $u(X) > u(Y)$.

\emph{Case \#2}: 
the graph of $L^k$ ends at zero. 
So we have $\level(X^{-K})=\level(Y^{-K})=\level(Y)$.
Now, $\widehat{(iii)}$  says that 
$\level(X)>\level(Y)$; this means that $X$ must contain items that are not in $X^{-K}$.
We assume that utilities are strictly positive, so $u(X) > u(X^{-K})$. Since $u(X^{-K}) \geq u(Y^{-K})$ and $Y^{-K} = Y$, we get $u(X) > u(Y)$.

The same is true for every $u\in\udd(\succ)$. Hence $Y\not\succsim^{PDD} X$.
\end{proof}

Theorem \ref{thm:X-PDD-Y} implies that there is a polynomial-time algorithm to check whether $X\succsim^{PDD} Y$; 
the algorithm is similar to Algorithm~\ref{algo:comparedd} and we omit it.

Using Theorem \ref{thm:X-PDD-Y}, we illustrate the difference between PDD-fairness and possible\-/fairness.
\begin{example}[PDD-fairness vs. possible\-/fairness]
\label{exm:pdd}
There are $m = 2 l$ items, for some $l\geq 3$.
Alice and Bob have the same preferences:
\begin{align*}
2 l \succ 2 l - 1 \succ ... \succ 4 \succ 3 \succ 2 \succ 1
\end{align*}
Both Alice and Bob get $l$ items: Alice gets $2 l, 2 l - 1, ...l + 3, l + 2, 1$ and Bob  gets $l + 1,l,...3,2$. Intuitively this allocation seems very unfair since Alice gets all the $l-1$ best items.  However, it is possibly-proportional, since Bob's utility function might assign the a value near 0 to item 1 and a value near 1 to all other items.

In better accordance with our intuition, the above allocation is not PDD-proportional: by Theorem \ref{thm:X-PDD-Y}, Bob's bundle is not PDD-better than Alice's bundle, since it does not satisfy any of the conditions (i) to (iii). \qed
\end{example}

Based on the two constructive theorems proved in this section, we have:
\begin{corollary}
\label{cor:polynomial}
The following problems can be decided in polynomial time:

(a) Given an allocation, decide whether it is NDDPR;

(b) Given an allocation, decide whether it is PDDPR.
\end{corollary}


\section{Existence of NDD-Proportional Allocations}
\label{sec:NDDPR}
In this section, we prove a necessary and sufficient condition for the existence of NDDPR allocations.
\begin{theorem}
	\label{thm:nddpr}
	An NDDPR allocation exists if and only if:
	\begin{itemize}
		\item[] (a) The number of items is a multiple of the number of agents, i.e., $m=l\cdot n$, where $l$ is an integer and $n$ is the number of agents, and
		\item[] (b) Each agent has a different best item.
	\end{itemize}
	In case it exists, it can be found in time $O(m)$.
\end{theorem}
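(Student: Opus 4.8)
The plan is to prove both directions by reducing NDD-proportionality to a per-agent statement and feeding it into the characterization of $\succsim^{NDD}$. By \lemref{lem:dd-prop}, an allocation $\mathbf{X}$ is NDDPR iff $n\cdot X_i \succsim_i^{NDD} \allitems$ for every agent $i$, and by \thmref{thm:X-NDD-Y} this holds iff $n|X_i|\geq M$ and $\level((n\cdot X_i)^{-k})\geq \level(\allitems^{-k})$ for all $k\in\{1,\dots,M\}$. The only fact I need about the multi-bundle $n\cdot X_i$ is that its sorted level sequence is the level sequence of $X_i$ with each entry repeated $n$ times.

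For necessity I would first sum the cardinality condition $n|X_i|\geq M$ over all agents. Since the bundles partition $\allitems$ we have $\sum_i|X_i|=M$, so $nM\geq nM$ with equality, forcing $n|X_i|=M$ for every $i$; hence $|X_i|=M/n$ is an integer, which is condition (a) with $m=M/n$. Taking $k=1$ next, the best item of $n\cdot X_i$ is the best item of $X_i$, while $\allitems^{-1}$ is agent $i$'s globally best item of level $M$; the inequality $\level((n\cdot X_i)^{-1})\geq M$ therefore forces $X_i$ to contain agent $i$'s best item. As the $X_i$ are pairwise disjoint, these $n$ best items are distinct, which is condition (b). This part is routine.

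For sufficiency I would exhibit a size-$m$ partition in which each agent holds its best item and, for every $i$ and every $k$, the partial sums of the sorted level sequence of $n\cdot X_i$ dominate those of $\allitems$. Writing the block-boundary instances $k=jn$, the requirement reads $\level(X_i^{-j})\geq \tfrac1n\level(\allitems^{-jn}) = jM-\tfrac{j(jn-1)}{2}$, i.e. the $j$ best items of $X_i$ must have average level at least that of agent $i$'s $jn$ best items overall --- the natural ``fair share'' bound. My candidate construction is the balanced (\emph{snake}) picking sequence: agents repeatedly take their best remaining item in the alternating turn order $1,\dots,n,n,\dots,1,1,\dots$, implementable in $O(M)$ time with one pointer per preference list. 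Condition (b) guarantees that in the first round every agent's own best item is still available, so each $X_i$ contains agent $i$'s best item; this is exactly where (a) and (b) are used. To verify the domination for all $k$ I would bound the rank (in $\succ_i$) of agent $i$'s round-$r$ pick by the number of items taken before it, and then rerun the ``walk''/compensation argument from the if-part of \thmref{thm:X-NDD-Y}: every downward step of the gap between the two partial-sum graphs is matched by an earlier upward step, because the alternating order interleaves each agent's low-ranked picks with its high-ranked ones.

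The main obstacle is precisely this sufficiency verification. A fixed-order round-robin does \emph{not} suffice: an adversary can concentrate an agent's worst items into consecutive late rounds and break the domination at an intermediate $k$ (already visible at $n=2$, $m=3$), so the balance of the turn order is essential and the binding inequalities are tight. The delicate point is to show that the balanced order forces the partial-sum domination for every agent and every $k$ simultaneously --- in particular one cannot simply reduce to the block boundaries $k=jn$, since the within-block gap need not be monotone for the snake allocation --- together with the bookkeeping for an odd number of rounds $m$, where the final round is unpaired. By contrast, the cardinality and best-item reductions underlying conditions (a) and (b) are immediate.
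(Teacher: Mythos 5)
Your proposal is correct and takes essentially the same route as the paper: the same reduction via Lemma~\ref{lem:dd-prop} and Theorem~\ref{thm:X-NDD-Y}, the same necessity argument (summing the cardinality conditions and taking $k=1$), and the same balanced ``snake'' round-robin algorithm for sufficiency, including your correct observations that a fixed turn order fails and that checking only the block boundaries $k=jn$ is insufficient. The verification you defer as the delicate step is exactly what the paper carries out, by bounding the level of agent $i$'s round-$r$ pick by $M$ minus the number of items already taken and running an induction on rounds with the invariant that the accumulated level-difference is at least $n(n-1)/2$ after odd rounds and at least $n(i-1)$ after even rounds --- a formalization of your surplus/compensation sketch.
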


\begin{proof}
~

\textbf{NDDPR $\implies$ (a) and (b):}
Let $X_1,\ldots,X_n$ be an NDDPR allocation. So for all $i\in\allagents$, ~~$n\cdot X_i \succsim^{NDD}_i \allitems$. By the two conditions of Theorem \ref{thm:X-NDD-Y}:

(a) For all $i\in\allagents$: $|n\cdot X_i| \geq  |\allitems|
\implies n\cdot |X_i|  \geq  m.$
But this must be an equality since the total number of items in all $n$ bundles is exactly $m$. Therefore, the total number of items is $n\cdot |X_i|$ which is an integer multiple of $n$.

(b) For all $i$, the level of the best item in $n\cdot X_i$ must be weakly larger than the level of the best item in $\allitems$. So for all $i\in\allagents$, $X_i$ must contain agent $i$'s best item. So the best items of all agents must be different.

\textbf{(a) and (b)$\implies$ NDDPR:}
We show that, if (a) and (b) hold, then the \emph{balanced round-robin} algorithm (Algorithm \ref{algo:roundrobin}) produces an NDDPR allocation.%
\begin{algorithm}[h]
\caption{Balanced round-robin allocation of items}
\label{algo:roundrobin}
\centering
\algsetup{linenodelimiter=\,}
\begin{algorithmic}
\WHILE{there are remaining items}
\FOR{$i=1,\ldots,n$}
\STATE Give agent $i$ his best remaining item.
\ENDFOR
\FOR{$i=n,\ldots,1$}
\STATE Give agent $i$ his best remaining item.
\ENDFOR
\ENDWHILE
\end{algorithmic}
\end{algorithm}

\noindent
Let $X_i$ be the bundle allocated to agent $i$ by balanced-round-robin. 
We prove that $n\cdot X_i \succsim^{NDD}_i \allitems$ by the two conditions of Theorem \ref{thm:X-NDD-Y}.

Condition (i) is satisfied with equality, since by (a) each agent gets exactly $l$ items, so $|n\cdot X_i| = n l = m = |\allitems|$.

Condition (ii) says that, for every $k\range{1}{m}$, the total level of the $k$ best items in the multi-bundle $n\cdot X_i$ should be at least as large as the total level of the $k$ best items in $\allitems$.
It is convenient to verify this condition following Algorithm \ref{algo:comparedd}: we have to prove that, when going over the items in both bundles from best to worst, the total level-difference between them (the variable TotalLevelDiff in the algorithm) remains at least 0.

We first prove that this is true after the first round. By condition (b), in the first round, each agent receives his best item, so the level of the best $n$ items in $n\cdot X_i$ is $m$. 
The following table shows the levels and their differences for $k\in\{1,\ldots,n\}$ (here, it is important that all items in $\allitems$ are distinct):
\begin{center}
\begin{tabular}{llllll}
	 	& $k=1$ & $k=2$ & $k=3$ & $\ldots$ & $k=n$ \\
	$n\cdot X_i$	& $m$ & $m$ & $m$ & $\ldots$ & $m$ \\
	$\allitems$		& $m$ & $m-1$ & $m-2$ & \ldots & $m-n+1$ \\
	LevelDiff		& $0$ & $1$ & $2$ & \ldots & $n-1$ \\
	TotalLevelDiff	& $0$ & $1$ & $3$ & \ldots & $n(n-1)/2$ \\
\end{tabular}
\end{center}
We now prove that, after each round $r\geq 1$, the accumulated level-difference TotalLevelDiff for agent $i$ is at least $n(n-1)/2$ when $r$ is odd, and at least $n(i-1)$ when $r$ is even. We also prove that TotalLevelDiff is always at least 0.

The proof is by induction on $r$. We have just proved the base $r=1$.

Suppose now that $r>1$ and $r$ is even.
When agent $i$ picks an item, 
the number of items already taken is $r n - i$.
Therefore, agent $i$'s best remaining item has a level of at least $m-(r n-i)$. 
Therefore, the level-differences 
for $k\in\{(r-1)n+1,~\ldots~,r n\}$ are as in the following table (where the last row uses the accumulated level-difference of $n(n-1)/2$ from the induction assumption):
\begin{center}
	\small
	\begin{tabular}{lllll}
		$n\cdot X_i$	& $\geq m-r n+i$ & $\geq m-r n+i$ & $\ldots$ & $\geq m-r n+i$ \\
		$\allitems$		& $m-r n + n$ & $m-r n + n-1$ & \ldots & $m-r n + 1$ \\
		LevelDiff		& $\geq i-n$ & $ \geq i-n+1$ & \ldots & $\geq i-1$ \\
		TotalLevelDiff	& $\geq {n(n-1)\over 2}+i-n$ & $ \geq {n(n-1)\over 2}+2 i-2 n+1$ & \ldots & $\geq n(i-1)$ \\
	\end{tabular}
\end{center}
The sum of terms in the LevelDiff row is  $\frac{n[(i-n)+(i-1)]}{2} = \frac{n[- n -1]}{2} + n i$.
Adding the $\frac{n(n-1)}{2}$ from the induction assumption gives that, at the round end, TotalLevelDiff is at least $ni - n = n(i-1)$ as claimed.
We now show that TotalLevelDiff is at least $0$ throughout the round.
LevelDiff is non-positive in the first $n-i+1$ columns of the table, and positive afterwards. So TotalLevelDiff attains its smallest value at step $n-i+1$.
The sum of LevelDiff from step $1$ to step $n-i+1$ is $(i-n)(n-i+1)/2$. Hence TotalLevelDiff at step $n-i$ is at least 
$n(n-1)/2 - (n-i+1)(n-i)/2$.
Since $n\geq n -i+1$ and $n-1\geq n-i$, this expression is at least 0.

%
%

Suppose now that $r>1$ and $r$ is odd.
When agent $i$ gets an item, 
the number of items already taken is $r n - (n-i+1)$. Therefore, agent $i$'s best remaining item has a level of at least $m- r n + (n-i+1)$. Therefore, the level-differences 
for $k\in\{(r-1)n+1,~\ldots~,r n\}$ are as in the following table:
\begin{center}
	\scriptsize
	\begin{tabular}{lllll}
		$n\cdot X_i$	& $\geq m-r n+n-i+1$ & $\geq m-r n+n-i+1$ & $\ldots$ & $\geq m-r n+n-i+1$ \\
		$\allitems$		& $m-rn+n$ & $m-rn+n-1$ & \ldots & $m-rn+1$ \\
		LevelDiff		& $\geq 1-i$ & $\geq 2-i$ & \ldots & $\geq n-i$ \\
		TotalLevelDiff		& $\geq n(i-1)+1-i$ & $\geq n(i-1)+3-2 i$ & \ldots & $\geq n(n-1)/2$ \\
	\end{tabular}
\end{center}
The sum of terms in the LevelDiff row is $\frac{n[(1-i)+(n-i)]}{2} = \frac{n[n+1]}{2} - n i$. Adding the $n(i-1)$ from above gives that, at the round end, TotalLevelDiff is at least $n(n-1)/2$ as claimed.
We now show that TotalLevelDiff is at least $0$ throughout the round.
LevelDiff is non-positive 
in the first $i$ 
columns of the table, and positive afterwards. So TotalLevelDiff attains its smallest value at step $i$.  
The sum of LevelDiff from step $1$ to step $i$ is $i(1-i)/2$. Hence TotalLevelDiff at step $i$ is at least 
$n(i-1) + i(1-i)/2 = (i-1)(n-i/2)\geq 0$.
\end{proof}

Using Theorem \ref{thm:nddpr}, we illustrate the difference between NDD-fairness and necessary\-/fairness.
\begin{example}[NDD-fairness vs. necessary\-/fairness]
\label{exm:ndd}
Suppose the set of items is $\allitems=\{1,\ldots,2 l\}$, for some $l\geq 2$. Alice and Bob have almost opposite preferences:
\begin{align*}
\text{Alice:}&& 2 l \succ 2 l - 1 \succ ... \succ 4 \succ 3 \succ 2 \succ 1
\\
\text{Bob:}&&   2\succ 3 \succ 4 \succ... \succ 2 l - 1 \succ 2 l \succ 1
\end{align*}
Intuitively we would expect that opposite preferences make it easy to attain a fair division. 
However, in this case, no necessarily\-/proportional 
allocation exists:
By Remark \ref{thm:X-Nec-Y}, in a necessarily\-/fair allocation both agents must receive the same number of items ($l$). But Alice and Bob have the same worst item ($1$), so one of them must get it. Suppose it is Alice. So Alice has only $l-1$ items better than $1$, while Bob has $l$ items better than $1$. Hence, the allocation is not necessarily\-/proportional for Alice (her utility function might assign a value near 0 to this item and a value near 1 to all other items).

In contrast, our Theorem \ref{thm:nddpr} shows that an NDD-proportional allocation exists. Intuitively, since it is possible to give each agent his/her best items, they are willing to compromise on the less important items.\qed
\end{example}

\section{Simulation Experiments}
\label{sec:simulations}
\begin{figure*}[h]
\hskip -15mm
\includegraphics[width=150mm,height=70mm]{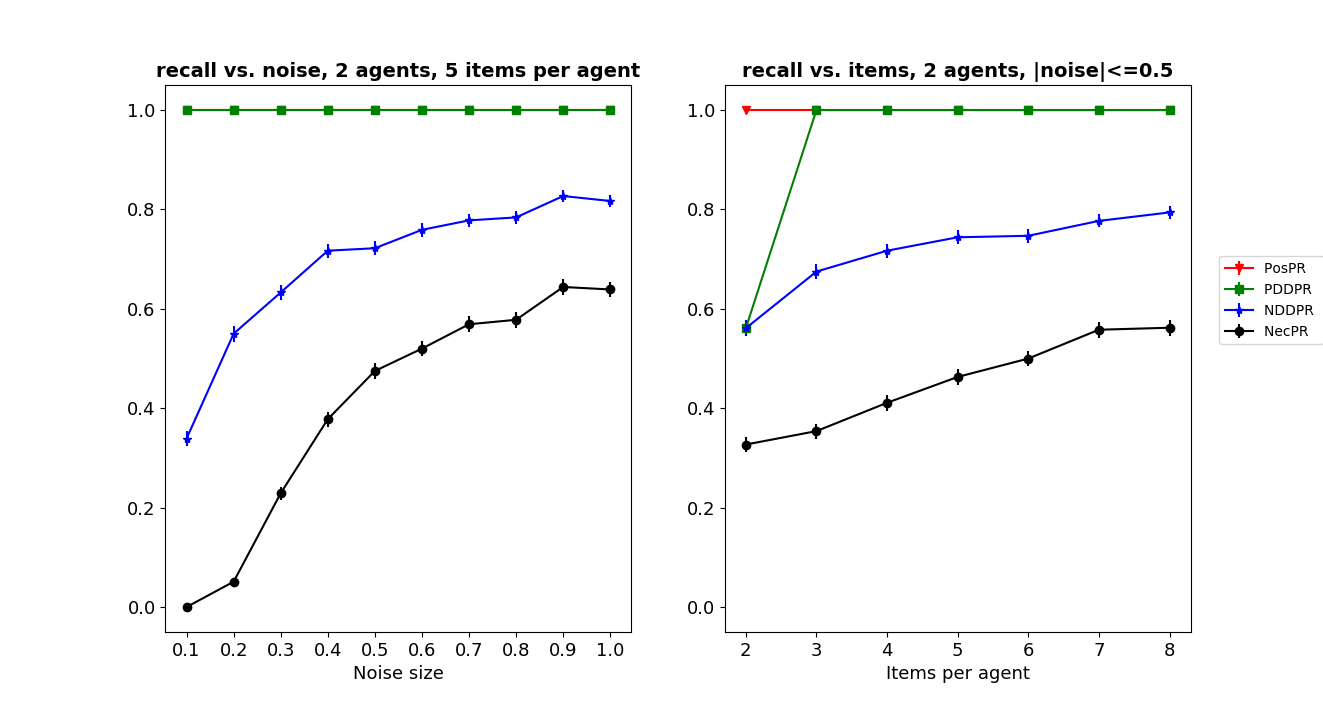}
\caption{Estimated ``recall'' --- the fraction of preference-profiles that admit an allocation satisfying each fairness criterion. 
Vertical bars denote sample standard error.
Lines connecting data-points are for eye-guidance only.
~~~
The top line corresponds to both PosPR and PDDPR --- for both of them the estimated recall is $1$, which means that all utility profiles we checked admit such allocations. The lines below them correspond to NDDPR and NecPR respectively.
\label{fig:recall}}
\end{figure*}
\begin{figure*}[h]
\hskip -15mm
\includegraphics[width=150mm,height=70mm]{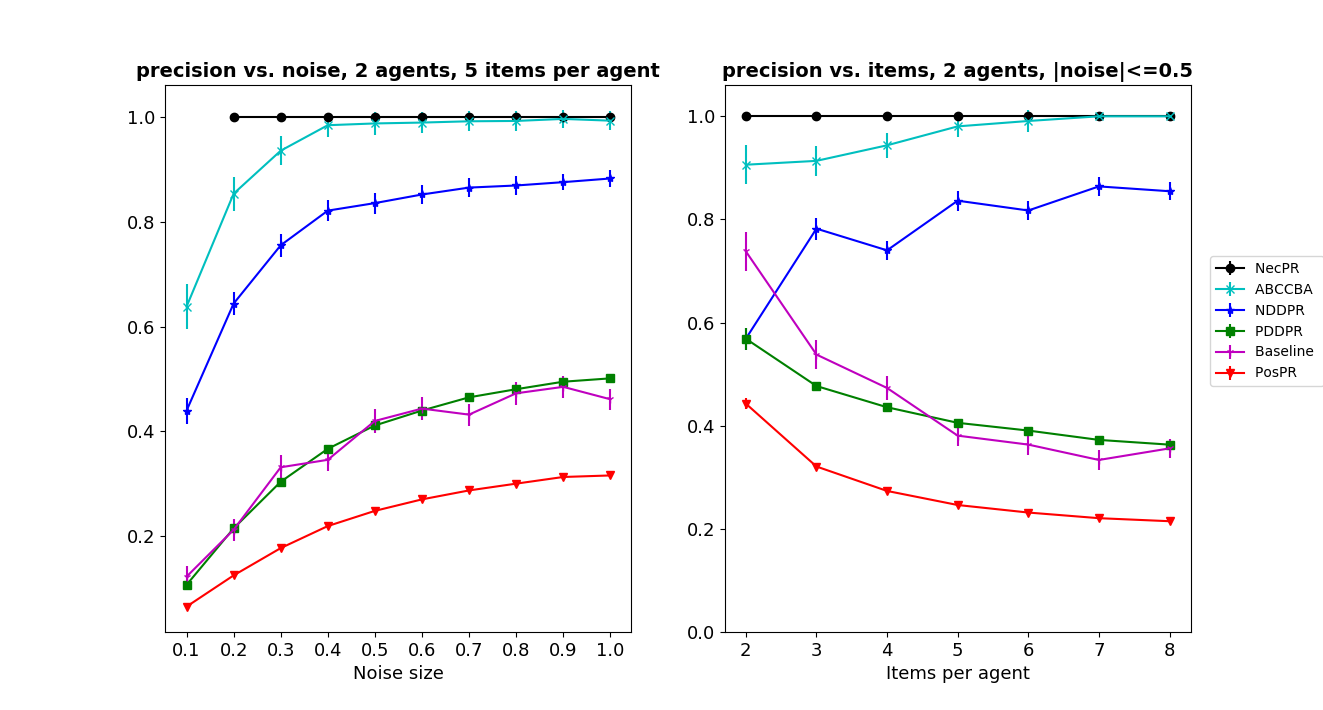}
\caption{Estimated ``precision'' --- the fraction of allocations that are fair according to the cardinal valuations, among those that are fair by the ordinal fairness criterion.
Vertical bars denote sample standard error.
Lines connecting data-points are for eye-guidance only.
~~~
The lines, from top to bottom, correspond to:
(a) NecPR --- by definition it is necessarily always $1$ (the point at noise size 0.1 is missing since no profile with this noise admitted a NecPR allocation);
(b) The NDDPR allocations found by the balanced-round-robin protocol (Algorithm \ref{algo:roundrobin});
(c) An arbitrary NDDPR allocation;
(d) An allocation found by a baseline protocol in which the first round is like Algorithm \ref{algo:roundrobin} but the following items are allocated at random;
(e) An arbitrary PDDPR allocation;
(f) An arbitrary PosPR allocation.
\label{fig:precision}}
\end{figure*}
A mechanism designer who has to choose a fairness criterion faces a tradeoff: choosing a weak criterion (such as PosPR or PDDPR) makes it easier to find an allocation that satisfies the criterion but also makes it more likely that some agents will consider it unfair. In contrast, with a strong criterion (such as NDDPR or NecPR), it is harder to find an allocation, but once an allocation is found, it is more likely that agents will consider it fair.
This tradeoff is analogous to the tradeoff between ``recall''  and ``precision'' in information retrieval and binary classification.%
\footnote{
See the Wikipedia page ``Precision and Recall'' for a definition of these terms in information retrieval and binary classification.
}
Given a fairness-criterion, we define its recall and precision as follows:
\begin{itemize}
\item The \emph{recall} of the criterion is the 
probability that a random utility-profile admits an allocation satisfying this criterion;
\item the \emph{precision} of a fairness-criterion is the 
probability that a random allocation satisfying this criterion according to the ordinal rankings is indeed fair according to the cardinal valuations.
\end{itemize}
We estimated the recall and precision of various fairness criteria as follows.

\subsection{Randomly-generated instances}
To simulate valuations with partial correlation, we determined for each item a ``market value'' drawn uniformly at random from $[1,2]$. We determined the cardinal value of each item to each agent as the item's market value plus noise drawn uniformly at random from $[-A,A]$, where $A\in[0,1]$ is a parameter.
Based on the cardinal values, we determined the agent's ordinal ranking. 
Then, for each such utility-profile, we checked various statistics:
\begin{itemize}
\item How many allocations are NecPR/NDDPR/PDDPR/PosPR according to the ordinal rankings;
\item How many NecPR/NDDPR/PDDPR/PosPR allocations are indeed proportional according to the underlying cardinal valuations;
\item Whether the specific NDDPR allocation found by the procedure of Theorem \ref{thm:nddpr} is proportional according to the underlying cardinal valuations;
\item As a baseline, we also checked the fairness of an allocation found (under the conditions of Theorem \ref{thm:nddpr}) by giving each agent its favorite item and dividing the remaining items randomly.
\end{itemize}
We did this experiment for $n\in\{2,3\}$ agents, for different values of $A\in \{0.1,\ldots,1\}$, and for different numbers $l$ of items per agent ---  $l\in\{2,\ldots,8\}$ when $n=2$ or $l\in\{2,\ldots,5\}$ when $n=3$. 
For each combination, we checked $1000$ randomly-generated instances.\footnote{
The Python code used for the experiments is available at GitHub:\\ https://github.com/erelsgl/fair-diminishing-differences
}

Below we report the results for $n=2$ agents; the results for $n=3$ agents are qualitatively similar and we omit them from the paper.\footnote{
All results and plots can be found online:
\\
https://github.com/erelsgl/fair-diminishing-differences/blob/master/results/Readme.md
}

\subsection{Results --- recall}
Figure \ref{fig:recall} presents the results for recall (probability of existence). As expected, 
the recall of the weak criteria --- PosPR and PDDPR --- is almost always 1; the recall of NDDPR is lower, but it is still significantly higher than that of NecPR. Thus, an NDDPR allocation is likely to exist in many cases in which a NecPR allocation does not exist.

As expected, both kinds of allocations are more likely to exist when the noise size $A$ is larger, since larger noise corresponds to less correlated rankings.
Similarly, both kinds of allocations are more likely to exist when there are more items to share; this finding resembles the results of  \citet{dickerson2014computational} for envy-free allocations with cardinal valuations.

\subsection{Results --- precision}
Figure \ref{fig:precision} presents the results for precision (probability of fairness). 
NecPR allocations, when they exist, are always proportional by definition; hence the precision of NecPR is always $1$.
The precision of NDDPR is lower than $1$, but it is much higher than that of the weaker criteria --- PosPR and PDDPR.

Interestingly, the specific NDDPR allocation found by the round-robin protocol of Theorem \ref{thm:nddpr} is very likely to be proportional --- in most cases its precision is very near $1$.

Note that, since the randomization we used is completely uniform and does not use the DD assumption, the probability that DD holds is very low.%
\footnote{
There are $n l$ items, so there are $n l - 1$ differences between utilities of adjacent items. DD requires that, for each agent, these differences be ordered in a descending order.
With high probability, all differences are distinct, so there are $(n l - 1)!$ different orders, and only one of them  corresponds to a DD utility function. Therefore, the probability that DD holds for each single agent is $1/(n l-1)!$, and for all $n$ agents it is $1/((n l-1)!)^n$. 
}
Nevertheless, the NDDPR allocation of Theorem \ref{thm:nddpr} (when it exists) is almost always proportional when the number of items or the noise size is sufficiently large. This further shows the robustness of our algorithm.%

Comparing the two graphs, we see that the NecPR requirement is too strong, and the PDDPR and PosPR requirements are too weak, while the NDDPR requirement hits a sweet spot between recall and precision: it allows us to solve a large fraction of the instances, and the solutions are likely to be considered fair by the agents.

\section{Envy-freeness}
\label{sec:NDDEF}
The following is an analogue of the definition of proportionality-related fairness criteria (Definition \ref{def:dd-prop}):
\begin{definition}[Envy-freeness]
\label{def:dd-ef}
Given utility functions $u_1,\ldots,u_n$, an allocation $\bf{X}$ is called \emph{envy-free (EF)} if $\forall i,j\in\allagents:~~ u_i(X_i) ~\geq~ u_i(X_j)$. 
~~~
Based on this definition, \emph{necessary\-/DD-envy-free (NDDEF)} and \emph{possible\-/DD-envy-free (PDDEF)} are defined analogously to NDDPR and PDDPR.
\\
\end{definition}

The following is a partial analogue of Lemma \ref{lem:dd-prop}
and contains an alternative characterization of NDDEF:
\begin{lemma}
\label{lem:dd-ef}
Given item rankings $\succ_1,\ldots,\succ_n$:

(a) An allocation $\bf{X}$ is NDDEF iff $\forall i,j\in\allagents:~  X_i~\succsim_i^{NDD}~X_j$.

(b) If an allocation $\bf{X}$ is PDDEF, then $\forall i,j\in\allagents:~  X_i~\succsim_i^{PDD}~X_j$.
\end{lemma}

\begin{proof}
Let $EF(i,j,\mathbf{u})$ be the no-envy predicate $u_i(X_i) ~\geq~ u_i(X_j)$.

(a)
The NDDEF definition is
``For all DD utility profiles $\mathbf{u}$, for all $i$ and for all $j$, $EF(i,j,\mathbf{u})$.''
The right-hand side is 
``for all $i$, for all $j$, for all DD utility profiles $\mathbf{u}$, $EF(i,j,\mathbf{u})$.''
Switching the order of for-all quantifiers yields logically-equivalent statements.

(b) 
The PDDEF definition is 
``there exists $\mathbf{u}$ for which, for all $i$ and for all $j$, $EF(i,j,\mathbf{u})$.''
The right-hand side is ``for all $i$ and for all $j$, there exists $u_i$ by which $EF(i,j,\mathbf{u})$''. 
The former statement logically implies the latter.%
\footnote{
In contrast to Lemma \ref{lem:dd-prop},
here the latter statement (which can be called ``Weak-PDDEF'') does not imply the former (PDDEF) when there are three or more agents. For example, if $X_1 \succsim^{PDD}_1 X_2$ and $X_1 \succsim^{PDD}_1 X_3$, then it is possible that agent 1 does not envy agent 2 by some
DD function $u_{1,2}$, and does not envy agent 3 by some other DD function $u_{1,3}$, but there is no \emph{single} DD function by which agent 1 envies neither agent 2 nor agent 3.
}
\end{proof}
Based on Lemma \ref{lem:dd-ef}(a), Corollary \ref{cor:polynomial} extends to NDDEF: it is possible to decide in polynomial time whether a given allocation is NDDEF.
However, we do not have a strongly-polynomial time algorithm for deciding whether a given allocation is PDDEF.%
\footnote{
The situation is similar for PosEF, see \citet{Aziz2015Fair}. 
For both fairness criteria, deciding whether a given allocation is fair can be done using a linear program with $m n$ variables describing the ``witness'' utility profile. The constraints require that the allocation is fair, and (for PDDEF) also that the utility profile satisfies the DD condition. This requires weakly-polynomial time.
As far as we know, it is an open question whether the decision problem can be solved in strongly-polynomial time.
}
Below we focus on the NDDEF criterion.

Since every NDDEF allocation is NDDPR, the two conditions of Theorem \ref{thm:nddpr} are necessary for the existence of NDDEF allocations for any number of agents. In the special case of $n=2$ agents, NDDPR is equivalent to NDDEF so these conditions are also sufficient. But for $n\geq 3$ they are no longer sufficient.

\begin{example}
\label{exm:nddef-3}
There are six items $\{1,\ldots,6\}$. The preferences of the three agents Alice Bob and Carl are:
\begin{itemize}
\item[] Alice: $6 \succ 5 \succ 3 \succ 4 \succ 2\succ 1$
\item[] Bob:  $5 \succ 4\succ 3\succ 6\succ 2\succ 1$ 
\item[] Carl:  $4\succ 6\succ 3\succ 5\succ 2\succ 1$
\end{itemize}
The conditions of Theorem \ref{thm:nddpr} are clearly satisfied: the number of items is a multiple of 3 and the best items are all different.
However, no NDDEF allocation exists.
To see this, note that the preferences are the same up to a cyclic permutation between $6$ $5$ and $4$, so the agents are symmetric and it is without loss of generality to assume that Alice receives item 1. Therefore, to ensure proportionality, Alice's bundle must be $\{6,1\}$ and her Borda score is $7$. To ensure that Alice is not envious,
both Bob and Carl must get items with a Borda score (for Alice) of $7$. Thus there are two cases:

(a) Bob gets $\{5,2\}$ and Carl gets $\{3,4\}$. This allocation is NDDPR but it is not NDDEF, since Bob envies Carl according to the Borda score. 

(b) Carl gets $\{5,2\}$ and Bob gets $\{3,4\}$. This allocation is not even NDDPR since Carl's Borda score is 5 (and Carl necessarily envies Bob).
\qed

\end{example}

When the number of agents is not bounded, deciding the existence of NDDEF allocations is computationally hard:
\begin{theorem}
\label{thm:nddef-npcomplete}
When there are $n\geq 3$ agents and at least $2 n$ items, checking the existence of NDDEF allocations is NP-complete (as a function of $n$).
\end{theorem}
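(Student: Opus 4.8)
The plan is to prove NP-completeness of deciding NDDEF existence by establishing membership in NP and then giving a polynomial reduction from a known NP-hard problem. Membership is the easy direction: given a candidate allocation $\mathbf{X}$, Lemma~\ref{lem:dd-ef} tells us that $\mathbf{X}$ is NDDEF iff $X_i \succsim_i^{NDD} X_j$ for every pair $i,j$, and by Theorem~\ref{thm:X-NDD-Y} (together with Algorithm~\ref{algo:comparedd}) each such pairwise check runs in polynomial time. So a nondeterministically-guessed allocation can be verified in polynomial time, placing the problem in NP.

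**For the hardness direction**, I would look for a combinatorial problem whose structure mirrors the requirement of simultaneously satisfying $\binom{n}{2}$ pairwise NDD-domination constraints under the rigid counting conditions forced by Theorem~\ref{thm:nddpr}. The setup is quite constrained: since NDDEF implies NDDPR, conditions (a) and (b) of Theorem~\ref{thm:nddpr} must hold, so in the canonical hard case $M = 2n$ each agent must receive exactly two items and must receive a distinct best item. Thus the combinatorial core is an assignment of a \emph{second} item to each agent (the first being essentially forced to be, or to include, that agent's top item), subject to the prefix-level domination inequalities of Theorem~\ref{thm:X-NDD-Y} holding between every ordered pair of bundles. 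This pairing-plus-packing flavor strongly suggests reducing from a problem such as a restricted $3$-dimensional matching, exact cover by $3$-sets, or a graph problem like partition-into-triangles / clique-cover, where gadget items encode membership and the NDD inequalities act as consistency checks that are satisfiable exactly when the target combinatorial object exists.

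**The key steps, in order,** would be: (1) fix the target NP-hard problem and design item-ranking gadgets so that feasible NDDEF allocations correspond bijectively to solutions of that problem; (2) engineer each agent's ranking so that condition (b) and the $M = mn$ counting condition force most of the bundle structure, leaving exactly the choices that encode the combinatorial decision; (3) exploit the exact characterization of $\succsim^{NDD}$ via prefix-level sums (Theorem~\ref{thm:X-NDD-Y})---in particular the fact that mutual non-envy $X_i \succsim^{NDD} X_j$ \emph{and} $X_j \succsim^{NDD} X_i$ between equal-size bundles forces $\level(X_i^{-k}) = \level(X_j^{-k})$ for all prefixes, a very tight equality constraint---to make the gadget inequalities tight precisely at intended solutions; and (4) prove both directions of the reduction, i.e.\ a YES-instance yields an NDDEF allocation and vice versa, then confirm the construction uses only $O(\mathrm{poly})$ items so that the $\geq 2n$-items regime is realized.

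**The main obstacle** I anticipate is step (3): the NDD relation is far less flexible than ordinary cardinality or Borda-score comparisons, because it demands that \emph{every} prefix-level inequality hold simultaneously and, for mutual non-envy between same-size bundles, that they hold with equality at every prefix. Designing item levels so that the prefix-sum constraints encode an NP-hard consistency condition---while still respecting the forced distinct-best-item structure and not accidentally admitting spurious allocations---is delicate; the gadget must make the global system of $\binom{n}{2}$ prefix-domination constraints feasible \emph{only} when the underlying combinatorial instance is a YES-instance. I would therefore spend most of the effort verifying that no unintended allocation slips through the prefix-level tests, which is the crux that makes the reduction correct rather than merely suggestive.
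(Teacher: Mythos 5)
Your membership argument is exactly the paper's (Lemma~\ref{lem:dd-ef} plus Theorem~\ref{thm:X-NDD-Y} and Algorithm~\ref{algo:comparedd}), and you correctly identify the right family of source problems --- the paper indeed reduces from exact-3-cover, by adapting the reduction of \citet{Bouveret2010Fair} for necessary envy-freeness. But the hardness direction of your proposal is a research plan, not a proof: you never fix the source problem, never construct the item set or the agents' rankings, and never prove either direction of the correspondence. The entire content of the paper's hardness proof lies precisely in the part you defer --- a concrete gadget construction ($3q$ main items, $3n$ dummy items, $3(n-q)$ auxiliary items, triples of agents with cyclic preferences inside each triplet) together with a verification that the NecEF arguments survive the weakening to NDDEF, which is done by showing that any NDDEF allocation is in particular Borda-envy-free and that Borda-score accounting already forces the cover structure. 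Your own ``main obstacle'' paragraph concedes that this crux is unaddressed, so the proposal cannot be credited as a proof of the theorem.

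There is also a concrete error in your step (3). You claim that mutual non-envy between equal-size bundles forces $\level(X_i^{-k}) = \level(X_j^{-k})$ for all prefixes. This would hold only if both NDD comparisons were taken under a \emph{single} ranking. In the NDDEF condition the two constraints for the pair $(i,j)$ are $X_i \succsim_i^{NDD} X_j$ and $X_j \succsim_j^{NDD} X_i$, evaluated under \emph{different} level functions $\level_i$ and $\level_j$; when the agents' rankings differ (which is the whole point of the construction --- Theorem~\ref{thm:nddpr} already forces distinct best items), no prefix-level equality follows. So the ``tight equality constraint'' you planned to exploit as the consistency mechanism of your gadgets does not exist, and a reduction built on it would fail. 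The paper avoids this trap by working with one-sided Borda-score deficits (an agent who misses his main item ends up with a Borda disadvantage of $2$ against an agent who got his, hence Borda-envy, hence not NDDEF) rather than with any symmetric equality condition.
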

\begin{proof}[Proof Sketch]
By Lemma \ref{lem:dd-ef}, to check whether an allocation is NDDEF, we have to do at most $n^2$ checks of the $\succsim^{NDD}$ relation. Each such check can be done in polynomial time by Theorem \ref{thm:X-NDD-Y} and Algorithm \ref{algo:comparedd}. Hence the problem is in NP. 
The proof of NP-hardness is similar to the proof of \citet{Bouveret2010Fair} for the NP-hardness of checking the existence of necessarily\-/envy-free allocations. The proof requires carefully checking that the reduction argument works for NDDEF as well.
The details are presented in Appendix\ref{app:proofs}.
\end{proof}

When the number of agents is \emph{constant} (at least 3) and the number of items is variable, the runtime complexity of checking NDDEF  existence is an open question: is it polynomial in $m$ like NDDPR, or NP-hard like necessary\-/EF \citep{Aziz2016Control}?

\section{Pareto\-/efficiency}
 \label{sec:efficiency}
An allocation is called \emph{Pareto\-/efficient} if every other allocation is either not better for any agent, or worse for at least one agent:
\begin{definition}[Pareto\-/efficiency]
\label{def:dd-pe}
Given utility functions $u_1,\ldots,u_n$, an allocation $\bf{X}$ is called \emph{Pareto\-/efficient (PE)} if
for every other allocation $\bf{Y}$,
either 
$\forall i\in\allagents: u_i(X_i) \geq u_i(Y_i)$, or 
$\exists i\in\allagents: u_i(X_i) > u_i(Y_i)$.
~~~
Based on this definition, \emph{necessary\-/DD-Pareto\-/efficiency (NDDPE)} and 
\emph{possible\-/DD-Pareto\-/efficiency (PDDPE)}
are defined analogously to NDDPR and PDDPR.

\end{definition}
The criteria of \emph{necessary\-/Pareto\-/efficiency (NecPE)}
and \emph{possible\-/Pareto\-/efficiency (PosPE)} are defined analogously.
It is clear from the definition that NecPE implies NDDPE implies PDDPE implies PosPE.
With the analogous fairness criteria, these implications are strict, i.e., some possibly-fair allocations are not PDD-fair, and some NDD-fair allocations are not necessarily\-/fair.
But with Pareto\-/efficiency the situation is different: 
\begin{theorem}
An allocation is NecPE if and only if it is NDDPE.
\end{theorem}
\begin{proof}
The implication NecPE $\implies$ NDDPE is obvious by the definition. 
We now consider an allocation $\bf{X}$ that is not NecPE and prove $\bf{X}$ is not NDDPE.

By \citet{Aziz2016Optimal} Theorem 9,
if $\bf{X}$ is not NecPE then
 there are two options:

(i) $\bf{X}$ is not possibly-PE. Then, it is certainly not NDD-PE.

(ii) $\bf{X}$ admits a \emph{Pareto\-/improving one-for-two-swap}. This means that there are two agents, say Alice and Bob, such that $X_A$ contains an item $x$, 
$X_B$ contains two items $y,z$, 
and Bob strictly prefers the one item over each of the two: $x \succ_B y$ and $x\succ_B z$.
Then $X$ is not NDD-PE, since it is not PE for the following utilities:
\begin{align*}
u_A(x) = m^2+\level_A(x)
&&
u_B(x) = 2^{\level_B(x)}
\end{align*}
Note that both utility functions have DD.
Alice's utility is dominated by the number of items she has, so she always prefers two items to one.
Bob's utility is lexicographic, so he always prefers one good item to any number of worse items.
Hence, by switching $\{x\}$ and $\{y,z\}$
we get a new allocation that is strictly better for both Alice and Bob, and does not affect any other agent.
\end{proof}
\begin{theorem}
An allocation is PosPE if and only if it is PDDPE.
\end{theorem}
\begin{proof}

The implication PDDPE $\implies$ PosPE is obvious by definition. 
We now consider an allocation $\bf{X}$ that is not PDDPE and prove $\bf{X}$ is not PosPE.

Consider the \emph{lexicographic} utility profile, by which for each $i\in\allagents$, $u_i(x) = 2^{\level_i(x)}$. Since these utilities have DD, $\bf{X}$ is not PE according to this profile. So there exists an allocation $\bf{Y}$ 
by which for some agent, say Alice: $u_A(Y_A)>u_A(X_A)$, and for all agents $B$: $u_B(Y_B)\geq u_B(X_B)$.

Since Alice prefers $Y_A$ to $X_A$ by a lexicographic utility function, there exists some integer $k\geq 1$ such that $X_A$ and $Y_A$ contain the same $k-1$ best items, but the $k$-th best item in $Y_A$ (denoted by $y_a$) is better for Alice than the $k$-th best element in $X_A$.

In allocation $\bf{X}$, item $y_a$ belonged to some other agent, say Bob.  
But Bob must be weakly better-off in $\bf{Y}$ than in $\bf{X}$,
so $Y_B$ must contain a better item that was not in $X_B$; 
let's call this item $y_b$.

In allocation $\bf{X}$, item $y_b$ belonged to some other agent,
say Carl. 
From similar considerations, Carl must have in $\bf{Y}$ an item
$y_c$ that he prefers to $y_b$.
Continuing this way, we end with a cycle of agents, 
each of whom gave an item to the previous agent and received a
\emph{better} item from the next agent.

Now consider the allocation $\bf{Z}$ which is identical to
$\bf{X}$ except that the single-item exchanges in the cycle take place (so $y_a$ is given to Alice, $y_b$ is given to Bob and so on).
Then $\bf{Z}$ is better than $X$ for all agents in the cycle,
and this is true for any additive utility function.
Hence, $\bf{X}$ is not possibly-PE.
\end{proof}

So DD leads to new fairness criteria but not to new efficiency criteria.

\section{Conclusions and Future Work}
We formalized natural ways to compare sets of goods by using the DD (diminishing differences) assumption. 
In Appendix \ref{sec:increasing}
we present the analogous ID (increasing differences) assumption for chores.
The relations lead to new fairness criteria which we studied in detail. 
Two main open questions remain for future work: 
one about envy-free allocation of goods (Section \ref{sec:NDDEF}),
and one about fair allocation of chores (Appendix \ref{sec:increasing}). Below we present the smallest cases in which these questions are open.
\begin{enumerate}
\item There are three agents with different rankings over $m$ goods.
Can it be decided in time polynomial in $m$, whether there exists a
necessary\-/DD envy-free allocation?
\item There are three agents with different rankings over $m$ chores.
Can it be decided in time polynomial in $m$, whether there exists a
necessary\-/ID proportional allocation?
\end{enumerate}

Besides these questions, 
it may be interesting to extend the results to the case where agents may be indifferent between items.%
\footnote{
Theorem \ref{thm:X-NDD-Y} is proved for multi-bundles so it holds with indifferences too. But Theorem \ref{thm:nddpr} fails. 
Consider an instance with $m=8$ goods and  $n=2$ agents with the following rankings:
\begin{align*}
Alice:&&   a\succ b\succ c\succ d
 = w = x = y \succ z
\\
Bob:&&   b\succ a\succ c\succ d
 = w = x = y \succ z
\end{align*}
The agents have different best goods, so we might think that balanced-round-robin might yield an NDDPR allocation. However, when goods are picked in the order ABBAABBA, Alice's bundle is $\{a,d,x,z\}$; it is not NDDPR for her, since it is not proportional by Borda scores (the total Borda score is $5+4+3+2+2+2+2+1=21$, while Alice's Borda score is $5+2+2+1=10$).
}

Additionally, it may be interesting to identify other interesting set extensions that correspond to classes of utility functions. 
For example, suppose that agents care both about getting a best item and about not getting a worst item, but do not care much about intermediate items (so the differences in utilities are decreasing at first and then increasing). What can be said of fair allocations under this assumption?

\section*{Acknowledgments}
We acknowledge the Dagstuhl Seminar 16232 on Fair Division where this project was initiated.
We are grateful to four anonymous IJCAI reviewers and three anonymous JAIR reviewers for their very helpful comments.

Haris Aziz is supported by a Scientia Fellowship.
Erel Segal-Halevi was supported by the ISF grant 1083/13, the Doctoral Fellowships of Excellence Program and the Mordecai and Monique Katz Graduate Fellowship Program at Bar-Ilan University. Avinatan Hassidim is supported by ISF grant 1394/16.

\newpage
\appendix

\section{Chores and Increasing Differences}
\label{sec:increasing}
In this section, we assume that we have to divide indivisible \emph{chores}, defined as items with negative utilities.
Therefore, all the utility functions we consider in this section assign strictly negative values to all items. 

With chores, the Diminishing Differences condition means that the difference between the easiest to the second-easiest chore is larger than the difference between the second-hardest to the hardest chore. But usually, with chores, people care more about not getting the hardest chores than about getting the easiest chores. 
Therefore, we introduce the condition of  \emph{increasing} differences (ID).
In many aspects, the ID condition for chores is analogous to the DD condition for goods (subsection \ref{sub:id-basic}). However, finding necessarily\-/ID-fair allocation for chores is more difficult than necessarily\-/DD-fair allocation for goods (subsections \ref{sub:id-fair},\ref{sub:id-2},\ref{sub:id-3}).
\subsection{Increasing differences: basic definitions}
\label{sub:id-basic}
The following definition is analogous to Definition \ref{def:dd-utilities}:
\begin{definition}
\label{def:id-utilities}
Let $\succ$ be a preference relation and
$u$ a utility function consistent with $\succ$.
We say that
$u$ has the \emph{Increasing Differences (ID)} property if, for every three items with consecutive levels $x_3\succ x_2\succ x_1$ such that $\level(x_3)=\level(x_2)+1=\level(x_1)+2$,
it holds that $u(x_3)-u(x_2)\leq u(x_2)-u(x_1)$.

We denote by $\uid(\pref)$ the set of all ID utility functions consistent with $\pref$.

Given $n$ rankings $\pref_1,\ldots,\pref_n$, 
We denote by $\uid(\pref_1,\ldots,\pref_n)$ the set of all vectors of ID utility functions, $u_1,\ldots,u_n$, such that $u_i$ is consistent with $\pref_i$.
\end{definition}
There is a one-to-one correspondence between DD utilities and ID utilities.
Given a strict ranking $\succ$, define its reverse ranking $\rev{\succ}$ as:

\begin{align*}
\forall x,y\in\allitems: &&
y \rev{\succ} x\iff x\succ y 
\end{align*}
Given a utility function $u$, define its reverse function $\rev{u}$ as:
\begin{align*}
\forall x\in\allitems:&&\rev{u}(x) := - u(x)
\end{align*}
\begin{lemma}
\label{lem:dd-id}
For every ranking $\succ$ and utility function $u$:
\begin{align*}
\rev{u} \in \uid(\rev{\succ})
\iff
u\in\udd(\succ).
\end{align*}
\end{lemma}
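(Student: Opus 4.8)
The plan is to prove the biconditional in Lemma~\ref{lem:dd-id} by unwinding the three definitions involved --- consistency, the level function, and the diminishing/increasing differences property --- and showing that the reversal operation maps each one to its counterpart. Since the statement is an ``iff'', but the two sides are manifestly symmetric under the involution $(\succ, u) \mapsto (\rev{\succ}, \rev{u})$ (applying reversal twice returns the original ranking and utility), it suffices to prove a single direction, say $u \in \udd(\succ) \implies \rev{u} \in \uid(\rev{\succ})$; the converse then follows by applying this implication to the pair $(\rev{\succ}, \rev{u})$.

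First I would establish the relationship between the levels of an item under $\succ$ and under $\rev{\succ}$. If $\level$ denotes the level with respect to $\succ$ and $\level_{\mathrm{rev}}$ the level with respect to $\rev{\succ}$, then reversing the ranking sends the best item to the worst and vice versa, so for every item $x$ one has $\level_{\mathrm{rev}}(x) = M + 1 - \level(x)$. In particular, three items have consecutive levels under $\rev{\succ}$ if and only if they have consecutive levels under $\succ$, and the order is flipped: if $x_3 \succ x_2 \succ x_1$ are consecutive under $\succ$, then $x_1 \rev{\succ} x_2 \rev{\succ} x_3$ are consecutive under $\rev{\succ}$. Second I would check consistency: since $\rev{u}(x) = -u(x)$, the inequality $\rev{u}(x) > \rev{u}(y)$ is equivalent to $u(x) < u(y)$, which by consistency of $u$ with $\succ$ is equivalent to $y \succ x$, i.e. $x \rev{\succ} y$; hence $\rev{u}$ is consistent with $\rev{\succ}$.

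With these two facts in hand, the defining inequality transfers directly. Take any three items with consecutive levels under $\rev{\succ}$; by the level correspondence these are exactly three items $x_3 \succ x_2 \succ x_1$ with consecutive levels under $\succ$, now appearing in reversed order. The DD hypothesis on $u$ gives $u(x_3) - u(x_2) \geq u(x_2) - u(x_1)$. Negating and using $\rev{u} = -u$ turns this into $\rev{u}(x_2) - \rev{u}(x_3) \geq \rev{u}(x_1) - \rev{u}(x_2)$, which is precisely the ID inequality for $\rev{u}$ on the triple ordered by $\rev{\succ}$ (where $x_1$ is now the best and $x_3$ the worst). This is exactly the condition for $\rev{u} \in \uid(\rev{\succ})$.

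I do not expect any serious obstacle here: the lemma is a bookkeeping exercise in which the single substantive step is getting the level-reversal formula $\level_{\mathrm{rev}}(x) = M+1-\level(x)$ right and then carefully tracking which item plays the role of ``best'' after reversal. The only point requiring a little care is making sure the quantifier ``for every three items with consecutive levels'' ranges over the same triples on both sides, which the level correspondence guarantees; the sign flip then does all the work of converting ``$\geq$'' into the reversed ``$\leq$''.
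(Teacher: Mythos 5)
Your proof matches the paper's in substance: both unwind Definitions \ref{def:dd-utilities} and \ref{def:id-utilities}, observe that reversal preserves consistency and maps consecutive triples under $\rev{\succ}$ to consecutive triples under $\succ$ in flipped order, and convert the DD inequality into the ID inequality by negating and using $\rev{u}=-u$. The inequality bookkeeping in your third paragraph is correct.

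However, your reduction to a single direction is logically flawed. The involution $(\succ,u)\mapsto(\rev{\succ},\rev{u})$ does not map the statement to itself; it maps it to its \emph{dual}, swapping the roles of $\udd$ and $\uid$. Concretely, applying your proven implication ``$u'\in\udd(\succ')\implies\rev{u'}\in\uid(\rev{\succ'})$'' to the pair $(\rev{\succ},\rev{u})$ yields ``$\rev{u}\in\udd(\rev{\succ})\implies u\in\uid(\succ)$'', whose hypothesis is a DD condition --- not the ID condition ``$\rev{u}\in\uid(\rev{\succ})$'' from which the converse direction must start. So no substitution into the forward implication can ever produce the converse; you would need the separate dual implication ``$u'\in\uid(\succ')\implies\rev{u'}\in\udd(\rev{\succ'})$''. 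The gap is easy to close: every step in your chain (the bijection between consecutive triples and the negation of the inequality) is reversible, so simply phrasing the argument as a chain of equivalences --- which is exactly what the paper does --- proves both directions at once. Alternatively, prove the dual implication by repeating your argument with the inequalities flipped.
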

\ifdefined\FULLVERSION
\begin{proof}
Clearly 
$\rev{u}$ is consistent with $\rev{\succ}$,
iff 
$u$ is consistent with $\succ$. Now:

$\rev{u} \in \uid(\rev{\succ})$ ~~~$\iff$

for every three consecutive items $x_3\rev{\succ} x_2\rev{\succ} x_1$:\\
$~~~~~~~~\rev{u}(x_3)-\rev{u}(x_2)\leq \rev{u}(x_2)-\rev{u}(x_1)$ ~~~$\iff$

for every three consecutive items $x_1\succ x_2\succ x_3$:
\\
$~~~~~~~~ [-u(x_3)]- [-u(x_2)]\leq [-u(x_2)] - [-u(x_1)]$ ~~~$\iff$

for every three consecutive items $x_1\succ x_2\succ x_3$:
\\
$~~~~~~~~~u(x_1)-u(x_2) \geq u(x_2) - u(x_3)$ ~~~$\iff$

$u\in \udd(\succ)$.
\end{proof}
\else
The proof is technical and we omit it.
\fi

The negative-Borda utility function, $u_{-Borda}(x) := \level(x)-m-1$, is a member of $\uid(\succ)$, as well as the negative-lexicographic utility function,
$u_{-Lex}(x) := - 2^{m-\level(x)}$. By the latter function, the bundles are first ranked by whether they contain the worst chore, then by whether they contain the next-worst chore, etc.

An alternative characterization of $\uid$ is given by the following lemma. It is analogous to Lemma \ref{lem:u-in-DD} and proved in a similar way, so we omit the proof:
\begin{lemma}
\label{lem:u-in-ID}
$u\in \uid(\pref)$ iff, for every four items $x_2, y_2, x_1, y_1$ with $x_2\succsim x_1$ and $y_2\succsim y_1$
and $x_2\neq y_2$ and $x_1\neq y_1$:
\begin{align*}
\frac{u(x_2)-u(y_2)}{\level(x_2)-\level(y_2)}
\leq
\frac{u(x_1)-u(y_1)}{\level(x_1)-\level(y_1)}
\end{align*}
\end{lemma}

Analogously to Definition \ref{def:dd-relation} we define the relations $X \succsim^{NID} Y$ and $X \succsim^{PID} Y$.
These are closely related to their DD counterparts:
\begin{lemma}
\label{lem:ndd-nid}
Let $\succ$ be a ranking and $\rev{\succ}$ its inverse ranking.  Then, for every two multi-bundles $X,Y$:
\begin{align*}
X \succsim^{NID} Y
&&
\iff
&&
Y \rev{\succsim}^{NDD} X
\\
X \succsim^{PID} Y
&&
\iff
&&
Y \rev{\succsim}^{PDD} X
\end{align*}
\end{lemma}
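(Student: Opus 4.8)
The plan is to reduce the NID/PID relations directly to the NDD/PDD relations using the reversal correspondence established in Lemma~\ref{lem:dd-id}, rather than re-deriving everything from scratch. The key observation is that the sign-flipping map $u\mapsto\rev{u}$ together with the order-reversal $\succ\mapsto\rev{\succ}$ is a bijection between $\uid(\succ)$ and $\udd(\rev{\succ})$, and this bijection sends the inequality $u(X)\geq u(Y)$ to its reverse.

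First I would prove the NID equivalence. Unfolding Definition~\ref{def:id-relation}, the statement $X\succsim^{NID}Y$ means $u(X)\geq u(Y)$ for every $u\in\uid(\succ)$. By Lemma~\ref{lem:dd-id}, as $u$ ranges over $\uid(\succ)$, the function $v:=\rev{u}$ ranges exactly over $\udd(\rev{\succ})$ (the lemma gives $\rev{u}\in\uid(\rev{\succ})\iff u\in\udd(\succ)$; applying it with $\succ$ replaced by $\rev{\succ}$ and using $\rev{(\rev{\succ})}=\succ$ yields the correspondence in the direction we need). The crucial arithmetic step is that, by additivity and $\rev{u}(x)=-u(x)$, we have $\rev{u}(X)=-u(X)$ and $\rev{u}(Y)=-u(Y)$, so $u(X)\geq u(Y)$ is equivalent to $v(Y)\geq v(X)$. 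Quantifying over all $u$, the condition ``$u(X)\geq u(Y)$ for all $u\in\uid(\succ)$'' becomes ``$v(Y)\geq v(X)$ for all $v\in\udd(\rev{\succ})$'', which is precisely $Y\rev{\succsim}^{NDD}X$ by Definition~\ref{def:dd-relation} read with the ranking $\rev{\succ}$.

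The PID equivalence follows the same chain verbatim, except that the universal quantifier is replaced by an existential one throughout; since the bijection $u\mapsto\rev{u}$ preserves existence of a witness, ``$\exists u\in\uid(\succ):u(X)\geq u(Y)$'' is equivalent to ``$\exists v\in\udd(\rev{\succ}):v(Y)\geq v(X)$'', i.e.\ $Y\rev{\succsim}^{PDD}X$. I expect no genuine obstacle here: the whole argument is a routine transport of quantifiers across a bijection, and the only point requiring care is correctly tracking which ranking ($\succ$ versus $\rev{\succ}$) governs each relation so that the reversal in Lemma~\ref{lem:dd-id} is applied in the correct direction. Because the proof is essentially this bookkeeping, it can be presented in a few lines, and it is natural to state that the argument parallels the proof of Lemma~\ref{lem:dd-id}.
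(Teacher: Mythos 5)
Your proposal is correct and takes essentially the same approach as the paper: both unfold the NID/PID definitions and use Lemma~\ref{lem:dd-id} to transport the (universal or existential) quantification from $\uid(\succ)$ to $\udd(\rev{\succ})$ via the sign-flipping map $u\mapsto\rev{u}$, which reverses the inequality $u(X)\geq u(Y)$. The only difference is presentational: you spell out the additivity step $\rev{u}(X)=-u(X)$ and the involution bookkeeping explicitly, which the paper leaves implicit.
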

\ifdefined\FULLVERSION
\begin{proof}
By Lemma \ref{lem:dd-id}:

$X \succsim^{NID} Y$ ~~~$\iff$

$\forall u\in \uid(\pref) \text{:~~~~~} u(X)\geq u(Y)$ ~~~$\iff$

$\forall \rev{u}\in \udd(\rev{\pref}) \text{:~~~~~} \rev{u}(X)\leq \rev{u}(Y)$ ~~~$\iff$

$Y \rev{\succsim}^{NDD} X$.

\noindent
Similarly:

$X \succsim^{PID} Y$ ~~~$\iff$

$\exists u\in \uid(\pref) \text{:~~~~~} u(X)\geq u(Y)$ ~~~$\iff$

$\exists \rev{u}\in \udd(\rev{\pref}) \text{:~~~~~} \rev{u}(X)\leq \rev{u}(Y)$ ~~~$\iff$

$Y \rev{\succsim}^{PDD} X$.
\end{proof}
\else
Again the proof is technical and we omit it.
\fi

Thus, to check whether $X \succsim^{NID} Y$ / $X \succsim^{PID} Y$ with regards to some ranking $\succ$, 
we can simply use Algorithm $\ref{algo:comparedd}$ with the inverse ranking $\rev{\succ}$.

We now want to prove an analogue of Theorem \ref{thm:X-NDD-Y} for chores.
For this, we order the chores in each multi-bundle by \emph{increasing} level, so  $X=\{x_1,\ldots, x_{|X|}\}$ where $x_1\preceq_i \ldots \preceq_i x_{|X|}$
For each $k\leq |X|$ we define $X^{k}$ as the $k$ \emph{worst} chores in $X$, $X^{k}:=\{x_1,\ldots,x_k\}$.

\begin{theorem}
\label{thm:X-NID-Y}
Given a ranking $\succ$ and two (multi-)bundles $X,Y$ of chores, 
$X\succsim^{NID} Y$ if and only if both of the following conditions hold:
\begin{enumerate}
\item $|X|\leq |Y|$;
\item For each $k\in \{1,\ldots, |Y|\}$:
$\level(X^{k}) \geq  \level(Y^{k})$.
\end{enumerate}
\end{theorem}
Note that condition (i) is the opposite of condition (i) in Theorem \ref{thm:X-NDD-Y}: $X$ must have weakly \emph{less} chores than $Y$. However, condition (ii) is identical to condition (ii) in Theorem \ref{thm:X-NDD-Y}.
\begin{proof}
Define the \emph{inverse-level} of an item/bundle as its level under the inverse-ranking $\rev{\succsim}$. So the inverse-level of the hardest chore is $m$ and of the easiest chore is $1$.

By Lemma \ref{lem:ndd-nid}, 
$X \succsim^{NID} Y$ iff
$Y \rev{\succsim}^{NDD} X$.
By Theorem \ref{thm:X-NDD-Y}, this holds iff both the following conditions hold:
\begin{enumerate}
\item $|Y|\geq |X|$;
\item For each $k\in \{1,\ldots, |Y|\}$,
the inverse-level of the $k$ chores in $Y$ that are best by $\rev{\succ}$ (i.e., worst by $\succ$), is at least as high as the inverse-level of the 
$k$ chores in $X$ that are worst by $\succ$.
\end{enumerate}
The first condition is equivalent to $|X|\leq |Y|$ and the second condition is equivalent to $\level(X^{k}) \geq  \level(Y^{k})$.
\end{proof}

\subsection{Increasing differences: fairness criteria}
\label{sub:id-fair}
Analogously to Definition \ref{def:dd-prop}, 
we define the fairness criteria NIDPR (necessary\-/ID-proportional) and PIDPR (possible\-/ID-proportional). 
Analogously to Lemma \ref{lem:dd-prop} and Corollary \ref{cor:polynomial}, and with similar proofs that we omit, we have:
\begin{lemma}
\label{lem:id-prop}
Given item rankings $\succ_1,\ldots,\succ_n$:
\begin{itemize}
\item An allocation $\bf{X}$ is NIDPR iff $\forall i\in\allagents:~ n\cdot X_i~\succsim_i^{NID}~\allitems$.
\item An allocation $\bf{X}$ is PIDPR iff
$\forall i\in\allagents:~ n\cdot X_i~\succsim_i^{PID}~\allitems$.
\end{itemize}
\end{lemma}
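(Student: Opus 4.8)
The plan is to mirror the proof of Lemma \ref{lem:dd-prop} essentially verbatim, replacing the DD relations and the utility set $\udd$ by their ID counterparts. The one structural fact I rely on is that $\uid(\succ_1,\ldots,\succ_n)$ is exactly the product of the single-agent sets $\uid(\succ_i)$: by Definition \ref{def:id-utilities}, a profile lies in $\uid(\succ_1,\ldots,\succ_n)$ iff each coordinate $u_i$ independently lies in $\uid(\succ_i)$. First I would unfold both sides using Definition \ref{def:id-relation} together with additivity, which gives $u_i(n\cdot X_i)=n\cdot u_i(X_i)$. Thus the relation $n\cdot X_i\succsim_i^{NID}\allitems$ reads $\forall u_i\in\uid(\succ_i): n\cdot u_i(X_i)\geq u_i(\allitems)$, and $n\cdot X_i\succsim_i^{PID}\allitems$ reads $\exists u_i\in\uid(\succ_i): n\cdot u_i(X_i)\geq u_i(\allitems)$.

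For the NIDPR equivalence, the assertion that $\bf{X}$ is NIDPR expands to $\forall (u_1,\ldots,u_n)\in\uid(\succ_1,\ldots,\succ_n)~\forall i\in\allagents: n\cdot u_i(X_i)\geq u_i(\allitems)$, whereas the right-hand side of the lemma reads $\forall i\in\allagents~\forall u_i\in\uid(\succ_i): n\cdot u_i(X_i)\geq u_i(\allitems)$. Since the profile set is the product of the single-agent sets, these are two nested universal quantifiers over the same family of constraints, and swapping universal quantifiers yields logically equivalent statements. This settles the first bullet.

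For the PIDPR equivalence, the assertion that $\bf{X}$ is PIDPR expands to $\exists (u_1,\ldots,u_n)\in\uid(\succ_1,\ldots,\succ_n)~\forall i\in\allagents: n\cdot u_i(X_i)\geq u_i(\allitems)$, while the right-hand side is $\forall i\in\allagents~\exists u_i\in\uid(\succ_i): n\cdot u_i(X_i)\geq u_i(\allitems)$. The forward implication is immediate, by projecting a single witnessing profile onto each coordinate. For the converse I would choose, for each agent $i$ separately, a witness $u_i\in\uid(\succ_i)$ with $n\cdot u_i(X_i)\geq u_i(\allitems)$; because the coordinates are independent, assembling these yields a single profile $(u_1,\ldots,u_n)\in\uid(\succ_1,\ldots,\succ_n)$ that is proportional for every agent simultaneously, hence witnesses PIDPR.

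The only step that is not purely formal is the converse direction of the PIDPR case, where one must invoke the product structure of $\uid(\succ_1,\ldots,\succ_n)$ to combine the per-agent witnesses into one joint profile. This is precisely the step that breaks for envy-freeness (cf.\ the footnote following Lemma \ref{lem:dd-ef}), but it is harmless here because each agent's proportionality constraint involves only that agent's own utility function, so the witnesses never interact. I therefore expect no genuine obstacle.
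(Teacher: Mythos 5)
Your proposal is correct and follows exactly the route the paper intends: the paper omits the proof of Lemma~\ref{lem:id-prop} precisely because it is the proof of Lemma~\ref{lem:dd-prop} with $\mathcal{U^{DD}}$ replaced by $\mathcal{U^{ID}}$ --- quantifier swapping for the necessary case, and assembling independent per-agent witnesses into one profile for the possible case. Your observation that the per-agent proportionality constraints never interact (unlike envy-freeness) is the same point the paper makes in the footnote after Lemma~\ref{lem:dd-ef}.
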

\begin{corollary}
\label{cor:polynomial-id}
The following problems can be decided in polynomial time:

(a) Given an allocation, decide whether it is NIDPR;

(b) Given an allocation, decide whether it is PIDPR.
\end{corollary}

In Section \ref{sec:NDDPR} we proved that an NDD-proportional allocation exists whenever the number of items is an integer multiple of the number of agents, and all agents have different best items. 
At first glance, the natural extension of this condition to chores is that all agents should have different worst chores. The following two examples show that this condition is neither sufficient nor necessary.
\begin{example}
\label{exm:nid-not-sufficient}
There are eight chores and four agents with rankings:
\begin{align*}
A:&&   a\succ b\succ c\succ d
\succ w\succ x\succ y\succ z
\\
B:&&   b\succ c\succ d\succ a
\succ w\succ x\succ z\succ y
\\
C:&&  c\succ d\succ a\succ b
\succ w\succ z\succ y\succ x
\\
D:&&   d\succ a\succ b\succ c
\succ x\succ z\succ y\succ w
\end{align*}
Each agent has a different best chore and each agent has a different worst chore.
However, at least one agent (the one who receives $y$) has a second-worst chore. This implies that an NIDPR allocation does not exist. To see this, suppose that all agents have the same ID scoring function:
\begin{align*}
-996,-997,-998,-999,
-1000,-2000,-3000,-4000
\end{align*}
The utility of the agent who receives $y$ is at most $-3996$. However, the total value is $-13990$ and 
the fair share is $-13990/4 = -3497.5$. \qed
\end{example}

\begin{example}
\label{exm:nid-not-necessary}
There are three chores and three agents with rankings:
\begin{align*}
A:&&   x\succ y\succ z
\\
B:&&   x\succ z\succ y
\\
C:&&  x\succ z\succ y
\end{align*}
All agents have the same best chore, and two agents have the same worst chore. However, the following allocation is NIDPR: 
\begin{align*}
A: \{y\} && B: \{x\} && C: \{z\}
\end{align*}
This is obvious for Bob since he receives his best (easiest) chore. To see that it is also true for Alice, we show that $3\cdot X_A \succsim^{NID}_A \allitems$ using Theorem \ref{thm:X-NID-Y}. Condition (i) clearly holds since both multi-bundles have 3 chores. For Condition (ii), compare the levels of the $k$ worst chores, for $k=1,2,3$:
\begin{center}
\begin{tabular}{llll}
& $k=1$ & $k=2$ & $k=3$ \\
$3\cdot X_A$	& $2$ & $2$ & $2$ \\
$\allitems$		& $1$ & $2$ & $3$ \\
Difference		& $+1$ & $0$ & $-1$ \\
Accumulated difference & $+1$ & $+1$ & $0$
\end{tabular}
\end{center}
The accumulated difference is always at least 0, so $3\cdot X_A \succsim^{NID}_A \allitems$. 
By a similar calculation, $3\cdot X_C \succsim^{NID}_C \allitems$. Hence the allocation is NIDPR.\qed
\end{example}

Below we present a different condition that is necessary for the existence of NIDPR allocations.
It is analogous to the ``only-if'' part of Theorem \ref{thm:nddpr}.
To state this condition, for each agent $i$, let $W_i$ be the set of $i$'s $\lceil{n-1\over 2}\rceil$ worst chores.

\begin{theorem}
\label{thm:nidpr-necessary}
If there exists a NIDPR allocation of chores among $n$ agents, then both the following conditions must hold: 

(a) The number of chores is $m = l\cdot n$, for some integer $l$.

(b) It is possible to allocate to each agent $i$, $l$ chores that are not from $W_i$. \\
(Hence, the intersection of all  $\lceil{n-1\over 2}\rceil$-worst-chores sets is empty: $\cap_{i\in\allagents} W_i = \emptyset$).
\end{theorem}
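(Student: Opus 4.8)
The plan is to reduce everything to the single-agent characterization of the $\succsim^{NID}$ relation. By Lemma \ref{lem:id-prop}, an allocation $\bf{X}$ is NIDPR if and only if $n\cdot X_i \succsim_i^{NID} \allitems$ for every agent $i$, so I would apply the two conditions of Theorem \ref{thm:X-NID-Y} to each multi-bundle $n\cdot X_i$ and the full set $\allitems$. Condition (i) of that theorem yields part (a), and condition (ii), instantiated at a single well-chosen value of $k$, yields part (b). For part (b) the witnessing allocation will simply be the NIDPR allocation $\bf{X}$ itself: I will show that $X_i\cap W_i=\emptyset$ for every $i$.

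For part (a), condition (i) of Theorem \ref{thm:X-NID-Y} gives $|n\cdot X_i|\leq|\allitems|$, that is $n|X_i|\leq M$, hence $|X_i|\leq M/n$ for every $i$. Since the bundles partition the chores we also have $\sum_{i\in\allagents}|X_i|=M$. Here the inequality points the opposite way to the goods case, but it still forces equality: if each of the $n$ integers $|X_i|$ is at most $M/n$ and they sum to $M$, then each must equal $M/n$, so $m:=M/n$ is an integer and $M=m\cdot n$.

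For part (b), I would instantiate condition (ii) of Theorem \ref{thm:X-NID-Y} at $k=n$ (legitimate since $M=mn\geq n$). Writing $\ell_i$ for the level, by $\succ_i$, of agent $i$'s worst chore inside $X_i$, note that because the chores of $X_i$ are distinct their levels are strictly increasing, so in the multi-bundle $n\cdot X_i$ the $n$ chores of smallest level are exactly the $n$ copies of that worst chore; hence $\level\big((n\cdot X_i)^{n}\big)=n\,\ell_i$. On the other side, the $n$ worst chores of $\allitems$ have levels $1,\ldots,n$, so $\level(\allitems^{n})=n(n+1)/2$. Condition (ii) then reads $n\,\ell_i\geq n(n+1)/2$, i.e. $\ell_i\geq (n+1)/2$, and since $\ell_i$ is an integer, $\ell_i\geq\lceil(n+1)/2\rceil=\lceil(n-1)/2\rceil+1$. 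Thus agent $i$'s worst chore in $X_i$ has level strictly above $\lceil(n-1)/2\rceil$; as $W_i$ consists precisely of the chores of level $1,\ldots,\lceil(n-1)/2\rceil$, no chore of $X_i$ lies in $W_i$, i.e. $X_i\cap W_i=\emptyset$. This shows the NIDPR allocation already assigns to each $i$ its $m$ chores outside $W_i$, proving (b). The parenthetical $\cap_{i\in\allagents}W_i=\emptyset$ follows immediately: a chore common to all $W_i$ would belong to the bundle $X_j$ of some agent $j$, yet also lie in $W_j$, contradicting $X_j\cap W_j=\emptyset$.

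The level-sum calculations are routine; the two points that require care are the ``reversed'' counting argument in part (a)---noticing that the $\leq$ inequality still pins down every $|X_i|$ because the cardinalities sum exactly to $M$---and, in part (b), the choice $k=n$ together with the parity bookkeeping $\lceil(n+1)/2\rceil=\lceil(n-1)/2\rceil+1$, which is exactly what makes the level bound match the size $\lceil(n-1)/2\rceil$ of $W_i$. Any smaller $k$ gives a weaker bound that fails to rule out $W_i$, so identifying $k=n$ as the sharp instance is the main thing to get right.
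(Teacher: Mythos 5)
Your proposal is correct and follows essentially the same route as the paper's own proof: both reduce to Lemma \ref{lem:id-prop} and Theorem \ref{thm:X-NID-Y}, derive (a) from condition (i) plus the fact that the bundle sizes sum to $M$, and derive (b) by instantiating condition (ii) at $k=n$, comparing $n\,\ell_i$ with $n(n+1)/2$ and rounding to get $\ell_i\geq\lceil (n+1)/2\rceil$, so that $X_i\cap W_i=\emptyset$. Your explicit parity bookkeeping and the contradiction argument for $\cap_i W_i=\emptyset$ are just slightly more detailed renderings of the same steps.
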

\begin{proof}
Let $(X_1,\ldots,X_n)$ be an NIDPR allocation. Then for every agent $i$,\\ $n\cdot X_i ~\succsim^{NID}_i~ \allitems$. By Theorem \ref{thm:X-NID-Y}.

(a) For every $i\in\allagents$: $|n\cdot X_i| \leq  |\allitems|
\implies n\cdot |X_i|  \leq  m.$
But this must be an equality since the total number of items in all $n$ bundles is exactly $m$. So the total number of items is $n\cdot |X_i|$ which is an integer multiple of $n$.

(b) For every $i\in\allagents$, the level of the $n$ worst chores in $n\cdot X_i$ must be weakly larger than the level of the $n$ worst chores in $\allitems$. 
The $n$ worst chores in $\allitems$ have levels $1,\ldots,n$, so their total level is ${n(n+1)\over 2}$.
The $n$ worst chores in $n\cdot X_i$ are just $n$ copies of the worst chore in $X_i$.
Thus, the level of this chore must be at least 
${n(n+1)\over 2} / n = {n+1\over 2}$. Since levels are integers, the smallest level in $X_i$ must be at least $\lceil {n+1\over 2} \rceil$.
So the agent must not get 
any of his $\lceil {n-1\over 2} \rceil$ worst chores. In other words, agent $i$ must not get any chore from the set $W_i$.
Since all chores must be allocated, no chore may be in the intersection of all $W_i$.
\end{proof}
In Example \ref{exm:nid-not-sufficient}, $\lceil {n-1\over 2}\rceil = 2$, and the intersection of the 2-worst-chores sets is not empty (it contains chore $y$), so a NIDPR allocation does not exist.
In Example \ref{exm:nid-not-necessary}, 
 $\lceil {n-1\over 2}\rceil = 1$, the intersection of the worst-chore sets is empty (not all three agents have the same worst chore), and a NIDPR allocation exists.

We do not know if the condition of Theorem \ref{thm:nidpr-necessary} is sufficient for the existence of NIDPR allocations in general. 
Below we prove that they are sufficient in two special cases: two agents, and three agents with ``almost'' identical rankings.

\subsection{NIDPR allocation for two agents}
\label{sub:id-2}
With two agents, for each $i\in\{1,2\}$, the set $W_i$ contains just the worst chore of agent $i$, so the necessary condition of Theorem \ref{thm:nidpr-necessary} simply says that each agent has a different worst chore.
This condition is also sufficient for the existence of NIDPR allocations. 
The following theorem is analogous to the ``if'' part of Theorem \ref{thm:nddpr} for $n=2$.
\begin{theorem}
\label{thm:nidpr-sufficient-2}
There exists a NIDPR allocation of chores among $n=2$ agents whenever the following conditions both hold: 

(a) The number of chores is $m = l\cdot n$, for some integer $l$.

(b) The worst chores of the agents are different.

In case it exists, it can be found in time $O(m)$.
\end{theorem}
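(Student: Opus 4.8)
The plan is to give a constructive proof: run the balanced round-robin procedure (Algorithm~\ref{algo:roundrobin}) with $n=2$, in which the two agents alternately take their best (easiest) remaining chore while the pick order reverses after every pass, and then show that its output $\mathbf{X}$ is NIDPR. By Lemma~\ref{lem:id-prop} it suffices to verify $2\cdot X_i \succsim_i^{NID} \allitems$ for each agent $i$, and by Theorem~\ref{thm:X-NID-Y} this reduces to two conditions. Condition~(i) is immediate from~(a): each while-pass hands out four chores, and a short parity check (when $m$ is odd the final pass is truncated to two picks) shows that each agent ends with exactly $m$ chores, so $|2\cdot X_i| = 2m = M = |\allitems|$ with equality.

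The heart of the proof is condition~(ii): for every $k$ the total level of the $k$ worst chores of $2\cdot X_i$ is at least $\level(\allitems^{k}) = \tfrac{k(k+1)}{2}$. I would check this exactly as in the if-part of Theorem~\ref{thm:nddpr}, by following Algorithm~\ref{algo:comparedd} and tracking the running quantity $\mathrm{TotalLevelDiff}(k) = \level((2\cdot X_i)^{k}) - \level(\allitems^{k})$, but now scanning the chores \emph{from worst to best} (increasing level). Two facts drive the argument. First, whenever an agent picks after $p$ chores have been removed, its best remaining chore has level at least $M-p$; this lower-bounds, pass by pass, the levels of the chores each agent ends up holding. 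Second, and crucially, condition~(b) supplies the buffer at the bottom of the scan: since the agents have different worst chores, the reversing round-robin never leaves an agent holding its own worst chore, so the lowest level in each $X_i$ is at least $2$. The two copies of this chore give $\mathrm{TotalLevelDiff}(1) = \level(w_i)-1 \ge 1$ and $\mathrm{TotalLevelDiff}(2) = 2\level(w_i) - 3 \ge 1 > 0$, and the pass-by-pass level bounds then show the running difference never returns below~$0$ as the scan proceeds to the easier chores, mirroring the inductive table of Theorem~\ref{thm:nddpr}.

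I expect the main obstacle to be precisely this reversed bookkeeping. In the goods case every agent's best item is pinned to level $M$, so the accumulation begins with a large surplus that trivially covers the later, worse items; for chores the delicate region is instead the hard, low-level chores encountered first, and the only surplus available there is the single unit produced by condition~(b). Hence the induction must be arranged so that this thin initial buffer is provably never exhausted, and the reversal of the pick order is exactly what keeps the two agents' holdings balanced at every prefix and makes this work: a non-reversing round-robin genuinely fails, as it can leave one agent with $2\cdot\level(X_i) < \level(\allitems)$, violating condition~(ii) at $k=M$. A subsidiary technical point, needed before the accumulation can even start, is the claim that balanced round-robin never allocates an agent its own worst chore; I would establish this first as a small lemma from condition~(b) and the structure of the reversing pick order. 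Finally, the $O(M)$ bound follows since the procedure makes $M$ picks, each in constant amortised time after pre-sorting the two rankings.
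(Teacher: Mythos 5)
Your proposal takes exactly the route of the paper's own proof (balanced round-robin, reduction via Lemma~\ref{lem:id-prop} to the two conditions of Theorem~\ref{thm:X-NID-Y}, then a worst-to-best accumulation whose initial buffer comes from condition (b)), but the ``small lemma'' on which your whole accumulation rests --- that under condition (b) the reversing round-robin never leaves an agent holding its own worst chore --- is false. Take $M=4$ chores $\{w,x,y,z\}$ with rankings $w\succ_1 x\succ_1 y\succ_1 z$ and $x\succ_2 y\succ_2 z\succ_2 w$; conditions (a) and (b) hold, since the worst chores are $z$ and $w$ respectively. The picks go in the order $1,2,2,1$: agent 1 takes $w$, agent 2 takes $x$, agent 2 then takes $y$ (it prefers $y$ to $z$), and agent 1 is left with its own worst chore $z$. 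The reason is that condition (b) concerns the \emph{global} worst chores, and this property does not survive the dynamics of the algorithm: agent 2's worst chore $w$ is agent 1's \emph{best} chore, so it disappears at the very first pick, after which the two agents rank the remaining pool identically and the last picker is stuck. The failure is not only of your lemma but of the conclusion itself: $2\cdot X_1=\{w,w,z,z\}$ has worst-two level sum $1+1=2 < 3 = \level_1(\{z,y\})$, so condition (ii) of Theorem~\ref{thm:X-NID-Y} fails at $k=2$ and the output is not NIDPR --- concretely, it is not proportional for the ID utility $u_1(w)=-1$, $u_1(x)=-2$, $u_1(y)=-3$, $u_1(z)=-10$ --- even though an NIDPR allocation exists here, namely $X_1=\{w,x\}$, $X_2=\{y,z\}$.

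You should be aware that this gap is inherited from the paper: its proof asserts at the corresponding point that ``by condition (b) the worst chores of the agents are different, hence the best remaining chores of the agents are also different,'' which is precisely the inference refuted above (in the last round both agents' best remaining chore is $y$). So you have reproduced the published argument together with its weakest step; isolating that step as a lemma to be proved was good instinct, but the lemma cannot be proved because it is not true. Note also that the gap is not patched by reordering the agents: starting the round-robin with agent 2 happens to produce an NIDPR allocation on this instance, but the mirror-image instance defeats that order, so no fixed pick order works and any order-selection rule would itself need justification. A genuine repair must use an allocation rule driven by the \emph{bottoms} of the rankings rather than the tops --- for example, a balanced round-robin in which the agents alternately \emph{give away} their worst remaining chore to the other agent (equivalently, pick exemptions under the reversed rankings); on the instance above this outputs $X_1=\{w,y\}$, $X_2=\{x,z\}$, which is NIDPR. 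But proving such a rule correct in general requires a new accumulation argument, not a local fix of the one you (and the paper) give.
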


\ifdefined\DirectProof
\begin{proof}
We prove that the balanced-round-robin algorithm (Algorithm \ref{algo:roundrobin}), used in the proof of Theorem \ref{thm:nddpr}, produces an NIDPR allocation in our case.
Let $X_i$ be the bundle allocated to agent $i$ by this algorithm. 
We prove that $n\cdot X_i \succsim^{NID}_i \allitems$ by the two conditions of Theorem \ref{thm:X-NID-Y}.

Condition (i) is satisfied with equality, since by (a) each agent gets exactly $l$ items, so $|n\cdot X_i| = n l = m = |\allitems|$.

Condition (ii) says that, for every $k\range{1}{m}$, the total level of the $k$ worst chores in the multi-bundle $n\cdot X_i$ is at least as large as the total level of the $k$ worst chores in $\allitems$.
In other words, the level-difference $\level_i((n\cdot X_i)^{k}) - \level_i(\allitems)$ must be at least 0 for every $k$.
Note that each agent receives chores from best to worst. We are interested in levels of chores from worst to best, so we analyze the algorithm from the last round towards the first.
We check that the total level-difference (the variable TotalLevelDiff in Algorithm \ref{algo:comparedd}) is always at least 0.

The worst chore in $X_i$ is the chore given to agent $i$ in the last round. 
In the last round, only two chores remain to be allocated. By condition (b) the worst chores of the agents are different, hence the best remaining chores of the agents are also different. Hence, in the last round each agent receives a chore with a level of at least 2, so the level of the worst $2$ chores in $n\cdot X_i$ is at least $2$. 
The following table shows the levels and their differences for $k\in\{1,2\}$:
\begin{center}
\begin{tabular}{lll}
 	& $k=1$ & $k=2$ \\
$n\cdot X_i$	& $\geq 2$ & $\geq 2$ \\
$\allitems$		& $1$ & $2$ \\
LevelDiff		& $\geq 1$ & $\geq 0$ \\
TotalLevelDiff	& $\geq 1$ & $\geq 1$ \\
\end{tabular}
\end{center}
Clearly the level difference is weakly-positive and there is a total accumulated difference of at least 1. 

We now prove that, for each round $r\geq 1$ (counting from the last one), the accumulated level-difference 
for $k\in\{2 r - 1, 2 r \}$ is at least $1$ when $r$ is odd, and at least $4 - 2 i$ when $r$ is even. In particular, it is always at least 0.
The proof is by induction on $r$. We have just proved the base $r=1$.

Suppose now that $r$ is even and $r>1$.
When agent $i$ gets a chore,
the number of remaining chores is 
$r n + 1 - i$, so agent $i$'s best remaining chore has a level of at least $r n + 1 - i = 2 r + 1 - i$. 
Therefore, the level-differences 
for $k\in\{2 r - 1, 2 r \}$ are as in the following table (the last row takes into account an accumulated difference of at least $+1$ from the induction assumption):
\begin{center}
\begin{tabular}{lll}
	$n\cdot X_i$	& $\geq 2 r + 1 - i$ & $\geq 2 r + 1 - i$ \\
	$\allitems$		& $2 r - 1$ & $2 r$ \\
	LevelDiff		& $\geq 2 - i$ & $ \geq 1 - i$ \\
	TotalLevelDiff	& $\geq 3-i$ & $\geq 4 - 2 i$ \\
\end{tabular}
\end{center}
So for both $i$, the accumulated differences are positive and their total is at least $4 - 2 i$, as claimed.

Suppose now that $r$ is odd and $r>1$.
When agent $i$ gets a chore, 
the number of remaining chores is $rn + i - 2$. Therefore, agent $i$'s best remaining item has a level of at least $rn + i - 2 = 2 r + i - 2$. Therefore, the level-differences 
for $k\in\{2 r-1,r\}$ are as in the following table(the last row takes into account an accumulated difference of at least $4 - 2 i$ from the induction assumption):
\begin{center}
	\begin{tabular}{lll}
		$n\cdot X_i$	& $\geq 2 r + i - 2$ & $\geq 2 r + i - 2$ \\
		$\allitems$		& $2 r - 1$ & $2 r$ \\
		LevelDiff		& $\geq i-1$ & $\geq i-2$ \\
		TotalLevelDiff	& $\geq 3-i$ & $\geq 1$ \\
	\end{tabular}
\end{center}
So for both $i$, the accumulated differences are positive and their total is at least $1$, as claimed.
\end{proof}
\else
Theorem \ref{thm:nidpr-sufficient-2} can be proved directly by analyzing the outcome of the balanced round-robin protocol (Algorithm \ref{algo:roundrobin}), similarly to the proof of Theorem \ref{thm:nddpr}.
This analysis is technical and we omit it.

Intuitively, when there are two agents, allocating chores is equivalent to allocating exemptions from chores.
An exemption from chore is a good;
hence, chore allocation is equivalent to good allocation.%
\footnote{
This observation was already made by \citet{Bogomolnaia2017Competitive} for divisible resources,
and proved formally by 
\citet{segal2018competitive}
for competitive equilibrium with indivisible objects.
}
An exemption from the worst chore is the best good; hence, Theorem \ref{thm:nddpr} implies Theorem \ref{thm:nidpr-sufficient-2}.%
\footnote{
The round-robin protocol would be slightly different in case of chores:  each agent should pick an \emph{exemption} from a chore, rather than a chore. In other words, each agent in turn should pick a chore and give it to the \emph{other} agent.
}
\fi

\subsection{NIDPR allocations for three agents}
\label{sub:id-3}
The analogy between goods and chores does not extend to $n\geq 3$ agents.%
\footnote{
as already noted by  \citet{Bogomolnaia2017Competitive} for divisible resources.
}
This is because for each chore, there are $n-1$ identical exemptions to share, and each agent must get at most one  such exemption; this constraint does not exist in the problem of allocating goods.

Hence, Theorem \ref{thm:nidpr-sufficient-2} does not generalize to three or more agents. 
The balanced-round-robin protocol does not necessarily find a NIDPR allocation, even if it exists. In Example \ref{exm:nid-not-necessary}, the rankings satisfy the necessary condition of Theorem \ref{thm:nidpr-necessary}, and a NIDPR allocation exists, but the round-robin protocol (in the order A B C) yields the allocation:
\begin{align*}
A: \{x\} && B: \{z\} && C: \{y\}
\end{align*}
which is not NIDPR since it gives Carl his worst chore.

For three agents, we consider the following special case:
\begin{itemize}
\item All agents have the same $n$ worst chores;
\item All agents have the same $m-n$ best chores, and rank them identically.
\end{itemize}
In some sense this is a ``worst case'' of fair allocations, since the agents' preferences are as similar as they can be without violating the necessary condition. 

We prove that, in this ``worst case'', the necessary condition of Theorem \ref{thm:nidpr-necessary} is also sufficient.
\begin{theorem}
\label{thm:nidpr-sufficient-3}
There exists a NIDPR allocation of chores among $n=3$ agents whenever the following conditions hold: 

(a) The number of chores is $m = l\cdot n$, for some integer $l$.

(b) Not all agents have the same worst chore;

(c) All agents have the same $n$ worst chores;

(d) All agents have the same $m-n$ worst chores and rank them identically.

In this case, it can be found in time $O(m)$.
\end{theorem}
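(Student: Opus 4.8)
The plan is to reduce the claim, via \lemref{lem:id-prop} and \thmref{thm:X-NID-Y}, to a purely combinatorial condition on levels, and then to construct an allocation satisfying it. By \lemref{lem:id-prop} the allocation is NIDPR iff $n\cdot X_i \succsim_i^{NID} \allitems$ for every agent $i$; by \thmref{thm:X-NID-Y} with $n=3$ this means each agent receives exactly $m$ chores and, for every $k$, the total level of the $k$ worst chores of $3\cdot X_i$ is at least $\level_i(\allitems^{k}) = k(k+1)/2$. I would group the chores into $m$ consecutive level-triples: the ``hard'' triple of levels $\{1,2,3\}$ (the common $n$ worst chores of condition (c), permuted per agent) and the $m-1$ ``easy'' triples $\{3s-2,3s-1,3s\}$ for $s=2,\ldots,m$ (ranked identically by all agents by condition (d)). I would give each agent exactly one chore of each triple, and for agent $i$ record the deviation $d^{(i)}_s$ of the level received in triple $s$ from that triple's middle value $3s-1$, together with the running sum $D^{(i)}_t=\sum_{s\le t} d^{(i)}_s$. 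A short computation (substituting the received level $\lambda_s = 3s-1+d_s$ into the inequality for each residue $k\equiv 0,1,2 \pmod 3$) shows that the NID condition for agent $i$ holds iff $D^{(i)}_t\ge 0$ for all $t\in\{1,\ldots,m\}$; the residues $1,2$ leave strict slack, so only the multiples-of-three inequalities are binding.

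It then remains to allocate one chore per triple so that every agent's partial sums stay nonnegative. For each easy triple the three deviations $-1,0,+1$ are distributed among the three agents, hence sum to $0$. For the hard triple I would first pick a bijection assigning each agent a hard chore that is \emph{not} its worst; such a worst-avoiding matching exists by Hall's theorem, because condition (b) guarantees no single chore is worst for all three agents. Each agent's hard-chore level is then $2$ or $3$, so $d^{(i)}_1\in\{0,1\}$ and $D^{(i)}_1\ge 0$. Crucially, I would choose this matching so that the surplus $C:=\sum_i d^{(i)}_1$ is at least $1$. I then process the easy triples in increasing order, maintaining the invariant that all $D^{(i)}_t\ge 0$ while $\sum_i D^{(i)}_t = C\ge 1$ (the easy triples contribute $0$ to this sum): since three nonnegative integers summing to $C\ge 1$ force at least one to be $\ge 1$, in each triple I can hand the lowest chore (deviation $-1$) to an agent whose current partial sum is $\ge 1$ and the remaining two chores to the other agents, keeping every partial sum nonnegative. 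This yields $D^{(i)}_t\ge 0$ for all $i,t$, hence an NIDPR allocation. The matching is $O(1)$ and the sweep over triples is $O(m)=O(M)$.

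The main obstacle is establishing that a worst-avoiding matching with surplus $C\ge 1$ always exists --- equivalently, that some agent can be given its \emph{best} hard chore (level $3$) while no agent gets its worst. This is essential: if we were forced into $C=0$ (every agent getting its middle hard chore), the very first easy triple would push some agent's partial sum to $-1$, so the one-chore-per-triple scheme would fail. I would prove it by taking a worst-avoiding matching maximizing $\sum_i \level_i$ of the assigned chore and showing the maximum cannot equal $6$: if the only worst-avoiding matching were the all-middle one, then assigning each agent its best hard chore is not a bijection, so two agents share a best chore, and a short case analysis on the resulting rankings exhibits a different worst-avoiding matching in which some agent receives a level-$3$ chore, a contradiction. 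This step is exactly where condition (b) and the $n=3$ structure are used.
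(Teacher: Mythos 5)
Your proof is correct and takes essentially the same route as the paper's: the identical two-phase algorithm (allocate the three common worst chores so that no agent gets its worst and at least one agent gets its best --- both proofs defer this to a short case analysis --- then sweep the remaining identically-ranked chores in worst-to-best triples, giving the worst chore of each triple to an agent with accumulated surplus), reduced through \lemref{lem:id-prop} and \thmref{thm:X-NID-Y} and maintained by the same invariant. The differences are purely notational: your deviations $D^{(i)}_t$ are the paper's TotalLevelDiff divided by $3$, your observation that only the multiples-of-three constraints bind replaces the paper's explicit verification of intermediate $k$ in its tables, and your sum-conservation argument ($\sum_i D^{(i)}_t = C \geq 1$) replaces the paper's direct inductive tracking of which agent holds the surplus.
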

\begin{proof}
We first allocate the $n$ worst chores.
By condition (b), it is possible to give each agent a chore 
with a level of at least 2. Moreover, by simple case analysis it is possible to see that it is always possible to give at least one agent a chore with a level of at least 3. Hence, after this step, 
the total level-differences of all agents are at least 0:
\begin{center}
\begin{tabular}{llll}
 	& $k=1$ & $k=2$ & $k=3$\\
$n\cdot X_i$	& $2$ & $2$ & $2$ \\
$\allitems$		& $1$ & $2$ & $3$ \\
LevelDiff		& $1$ & $0$ & $-1$ \\
TotalLevelDiff	& $1$ & $1$ & $0$\\
\end{tabular}
\end{center}
and the total level-difference of at least one agent is $3$:
\begin{center}
\begin{tabular}{llll}
 	& $k=1$ & $k=2$ & $k=3$\\
$n\cdot X_i$	& $3$ & $3$ & $3$ \\
$\allitems$		& $1$ & $2$ & $3$ \\
LevelDiff		& $2$ & $1$ & $0$ \\
TotalLevelDiff	& $2$ & $3$ & $3$\\
\end{tabular}
\end{center}
We now have $m-n$ remaining chores. By condition (d), the levels of these chores are the same for all agents, namely, $4,\ldots, m$.
We allocate them from worst ($4$) to best ($m$), using a round-robin protocol. There are $l-1$ allocation rounds; in each round, the first (worst) chore is given to an agent whose TotalLevelDiff is at least 3. We prove by induction that, indeed, when each round ends, there is at least one agent with TotalLevelDiff at least 3, while all other agents have TotalLevelDiff at least 0.

The induction base ($r=1$) was already proved above. Assume the claim is true until the beginning of some round $r$. The level of the next chore to allocate is $3 r - 2$. It is given to an agent with TotalLevelDiff at least 3, so his levels change as follows:
\begin{center}
\begin{tabular}{llll}
	& $k=3 r - 2$ & $k=3 r - 1$ & $k=3 r$\\
$n\cdot X_i$	& $3 r - 2$ & $3 r - 2$ & $3 r - 2$ \\
 $\allitems$	& $3 r - 2$ & $3 r - 1$ & $3 r$ \\
 LevelDiff		& $0$ & $-1$ & $-2$ \\
 TotalLevelDiff	& $\geq 3$ & $\geq 2$ & $\geq 0$\\
\end{tabular}
\end{center}
The next chore is $3 r - 1$. It is given to an agent with TotalLevelDiff at least 0, so his levels change as follows:
\begin{center}
\begin{tabular}{llll}
	& $k=3 r - 2$ & $k=3 r - 1$ & $k=3 r$\\
$n\cdot X_i$	& $3 r - 1$ & $3 r - 1$ & $3 r - 1$ \\
 $\allitems$	& $3 r - 2$ & $3 r - 1$ & $3 r$ \\
 LevelDiff		& $1$ & $0$ & $-1$ \\
 TotalLevelDiff	& $\geq 1$ & $\geq 1$ & $\geq 0$\\
\end{tabular}
\end{center}
The next chore is $3 r$. It is given to an agent with TotalLevelDiff at least 0, so his levels change as follows:
\begin{center}
\begin{tabular}{llll}
	& $k=3 r - 2$ & $k=3 r - 1$ & $k=3 r$\\
$n\cdot X_i$	& $3 r$ & $3 r$ & $3 r$ \\
 $\allitems$	& $3 r - 2$ & $3 r - 1$ & $3 r$ \\
 LevelDiff		& $2$ & $1$ & $0$ \\
 TotalLevelDiff	& $\geq 2$ & $\geq 3$ & $\geq 3$\\
\end{tabular}
\end{center}
As claimed, after round $r$ ends, all agents have TotalLevelDiff at least 0, and one agent has TotalLevelDiff at least 3.

Hence, the resulting allocation is NIDPR.
\end{proof}
Theorem \ref{thm:nidpr-sufficient-3} can be extended to more than 3 agents. Whenever the worst $n$ chores can be allocated such that the total level-difference of all agents is at least 0 and the total level-difference of some agents is sufficiently high, it is possible to allocated the other chores such that the total level-difference of all agents remains at least 0. 
Moreover, instead of requiring that all agents have exactly the same ranking to their $m-n$ best chores, it is sufficient that all agents have the same worst $n$ chores (levels $1,\ldots,n$), the same next-worst $n$ chores (levels $n+1,\ldots,2n$), etc. 
We omit these results since we believe that the main interesting challenge is generalizing the theorem to arbitrary rankings. Finding a general sufficient condition and protocol for NIDPR allocation of chores remains an interesting open problem.

\section{Binary Utilities}
\label{sec:binary}
In this section, we compare the diminishing/increasing differences assumptions to another natural assumption, which we call \emph{Binary}.
It is based on the assumption that each agent only cares about getting as many as possible of his $k$ best items, where $k$ is an integer that may be different for different agents. 
The binary assumption was also studied by \citet[Proposition 21]{bouveret2008efficiency}, who proved that finding an efficient envy-free allocations with such preferences is NP-complete. 

The following definition is analogous to Definitions \ref{def:dd-utilities} and \ref{def:id-utilities}:
\begin{definition}
\label{def:bin-utility}
Let $\succ$ be a preference relation and
$u$ a utility function consistent with $\succ$.
We say that
$u$ is \emph{Binary} if,
for some integer $k\geq 1$:
\begin{align*}
u(x) = 
\begin{cases}
1 & \text{when~} \level(x)\geq k
\\
0 & \text{when~} \level(x)< k
\end{cases}
\end{align*}
We denote by $\ubin(\pref)$ 
the set of all binary utility functions consistent with $\pref$.
\end{definition} 

Analogously to Definition \ref{def:dd-relation} we define the relations $X \succsim^{NBIN} Y$ and $X \succsim^{PBIN} Y$;
analogously to Definition \ref{def:dd-prop} 
we define NBIN-fairness and PBIN-fairness.

At first glance, the Binary assumption seems much more restrictive than the DD assumption. For every $\pref$, the set $\ubin(\pref)$ contains only $m$ utility functions --- much less than $\udd(\pref)$. Therefore, one could expect NBIN-fairness to be easier to satisfy than NDD-fairness. But this is not the case:  NBIN-fairness is equivalent to necessary fairness and PBIN-fairness is equivalent to possible fairness. This follows from the following theorem.
\begin{theorem}
For every item-ranking $\succ$ and every multi-bundles $X,Y$:

(a) $X \succsim^{Nec} Y$ if and only if $X \succsim^{NBIN} Y$ and

(b) $X \succsim^{Pos} Y$ if and only if $X \succsim^{PBIN} Y$.
\end{theorem}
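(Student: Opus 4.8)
The plan is to reduce both equivalences to a single combinatorial characterization in terms of \emph{threshold counts}. For a multi-bundle $X$ and an integer $k$, let $N_X(k)$ denote the number of items of $X$ (counted with multiplicity) whose level is at least $k$. The binary utility function with threshold $k$ from Definition \ref{def:bin-utility} assigns value $1$ to exactly these items, so it computes $u(X)=N_X(k)$. Hence, directly from Definition \ref{def:bin-relation}, $X\succeq^{NBIN}Y$ holds iff $N_X(k)\ge N_Y(k)$ for every $k\in\{1,\dots,M\}$, while $X\succeq^{PBIN}Y$ holds iff $N_X(k)\ge N_Y(k)$ for at least one such $k$. The whole theorem then amounts to showing that these same ``for all $k$'' and ``for some $k$'' threshold-count conditions characterize $\succeq^{Nec}$ and $\succeq^{Pos}$ respectively.

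The bridge to arbitrary additive utilities is a layer-cake (Abel summation) identity. Fix any $u\in\mathcal U(\succ)$, let $a_\ell$ be the value it assigns to the level-$\ell$ item, set $a_0:=0$, and write $d_\ell:=a_\ell-a_{\ell-1}$; consistency with $\succ$ gives $d_\ell>0$ for every $\ell$. Grouping the items of a bundle by level and summing by parts yields
\begin{align*}
u(X)-u(Y)=\sum_{k=1}^{M} d_k\,\bigl(N_X(k)-N_Y(k)\bigr).
\end{align*}
Conversely, every choice of positive gaps $d_1,\dots,d_M$ arises from some $u\in\mathcal U(\succ)$. Thus the questions ``is $u(X)\ge u(Y)$ for all $u$?'' and ``for some $u$?'' become questions about the sign of a positive-weight combination of the differences $\Delta_k:=N_X(k)-N_Y(k)$.

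For part (a) I would argue both directions from this identity. If $\Delta_k\ge 0$ for all $k$, every term above is nonnegative, so $u(X)\ge u(Y)$ for all $u$, giving $X\succeq^{Nec}Y$; combined with the binary reformulation this is $X\succeq^{NBIN}Y\Rightarrow X\succeq^{Nec}Y$. For the converse I use the contrapositive: if $\Delta_{k_0}<0$ for some $k_0$, choose the gaps so that $d_{k_0}$ is enormous and the remaining gaps are negligible; then the sum is dominated by $d_{k_0}\Delta_{k_0}<0$, so $u(X)<u(Y)$ and $X\not\succeq^{Nec}Y$. (Alternatively one may simply invoke Remark \ref{thm:X-Nec-Y}, whose condition is precisely $\Delta_k\ge0$ for all $k$.)

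For part (b) the contrapositive again handles one direction cleanly: if $\Delta_k<0$ for every $k$, then for all positive gaps the sum is strictly negative, so $u(X)<u(Y)$ for every $u$ and $X\not\succeq^{Pos}Y$; equivalently, $X\succeq^{Pos}Y$ forces $\Delta_{k}\ge0$ for some $k$, that is, $X\succeq^{PBIN}Y$. The reverse implication $X\succeq^{PBIN}Y\Rightarrow X\succeq^{Pos}Y$ is where the only real subtlety lies, and I expect it to be the main obstacle. When some $\Delta_{k_0}>0$ strictly, putting a dominant weight on $d_{k_0}$ gives $u(X)>u(Y)$ and we are done; but when the witnessing threshold yields only equality $\Delta_{k_0}=0$, the binary function itself gives merely $u(X)=u(Y)$, and one cannot in general tilt a strictly-consistent $u$ in favour of $X$. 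I would resolve this exactly as $\succeq^{Pos}$ is understood in the stochastic-dominance framework: the binary utility $u_{k_0}$ is an admissible (weakly monotone) witness with $u_{k_0}(X)\ge u_{k_0}(Y)$, equivalently the limit of strictly-consistent utilities along which the weak inequality $u(X)\ge u(Y)$ is preserved. This equality/boundary case is the step that must be stated carefully; everything else is the positive-combination bookkeeping established by the layer-cake identity.
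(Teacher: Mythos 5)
Your core argument is exactly the paper's: the paper proves the same Abel-summation identity, writing every $u\in\uall(\succ)$ as $u(X)=\sum_{j=2}^{M}\bigl(u(j)-u(j-1)\bigr)\cdot U_j(X)+u(1)\cdot U_1(X)$, i.e.\ a non-negative linear combination of the threshold (binary) utilities, and then reads off the two directions it proves from the signs of the threshold-count differences: all differences $\geq 0$ forces $u(X)\geq u(Y)$ for every $u$ (so NBIN implies Nec), and all differences $<0$ forces $u(X)<u(Y)$ for every $u$ (so not-PBIN implies not-Pos). Your $N_X(k)$, $d_k$ and $\Delta_k$ are the paper's $U_k(X)$, coefficients and differences in different notation.

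Where you differ is in the two remaining directions, which the paper dismisses with the single sentence ``it is sufficient to prove the following directions'', implicitly treating $\ubin(\succ)\subseteq\uall(\succ)$ so that Nec $\Rightarrow$ NBIN and PBIN $\Rightarrow$ Pos become mere restrictions of quantifier range. That inclusion is false under the paper's own Preliminaries, where consistency is strict (and utilities positive): no binary function is strictly consistent once $M\geq 3$. Your dominant-weight contrapositive for Nec $\Rightarrow$ NBIN is therefore a correct proof of a direction the paper never actually establishes. Moreover, the subtlety you flag in PBIN $\Rightarrow$ Pos is a genuine gap in the statement as literally written: take $M=2$ with ranking $2\succ 1$, $X=\{1\}$, $Y=\{2\}$; the $k=1$ threshold gives $U_1(X)=1=U_1(Y)$, so $X\succsim^{PBIN}Y$, yet every strictly consistent $u$ has $u(X)<u(Y)$, so $X\not\succsim^{Pos}Y$. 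Your proposed repair --- admitting the weakly-monotone binary witness itself, i.e.\ reading consistency weakly throughout --- is precisely the reading under which the paper's omission becomes harmless. One caution: your phrasing ``the limit of strictly-consistent utilities along which the weak inequality $u(X)\geq u(Y)$ is preserved'' is not quite right, since in the problematic case every strictly consistent approximant satisfies the strict reverse inequality, and the weak inequality appears only at the limit point, which lies outside $\uall(\succ)$.
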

\begin{proof}
It is sufficient to prove the following directions:

(a) If $X \succsim^{NBIN} Y$ then $X \succsim^{Nec} Y$;

(b) If not $X \succsim^{PBIN} Y$ then not $X \succsim^{Pos} Y$.

For the proof, we use the following notation.
\begin{itemize}
\item The $m$ items are denoted by their level, so the best item is $m$ and the worst is $1$.
\item For a multi-bundle $X$ and an item $j$, the number of copies of $j$ in $X$ is denoted $X[j]$.
\item The $m$ utility functions in 
$\ubin(\succ)$ are denoted by $U_k$, for $k\range {1}{m}$.
\end{itemize}
In this notation, for every $k\range {1}{m}$ and multi-bundle $X$:
\begin{align}
\label{eq:U_k}
U_k(X) = 
\sum_{j=k}^m X[j]
\end{align}
so $U_m(X)=X[m]$ (the agent cares only about the best item), $U_{m-1}(X)=X[m]+X[m-1]$ (the agent cares only about the two best items), etc.

Moreover, for every function $u\in \uall(\succ)$ and multi-bundle $X$:
\begin{align}
\label{eq:u}
u(X) &= 
\sum_{j=1}^m u(j)\cdot X[j]
\end{align}
Substituting the $X[j]$ in \eqref{eq:u} using \eqref{eq:U_k} gives:
\begin{align*}
u(X) &= 
u(m)\cdot U_m(X) + \sum_{j=2}^{m}
 u(j-1)\cdot \bigg(U_{j-1}(X)-U_{j}(X) 
\bigg)
\\
&= 
\sum_{j=2}^{m}
\bigg(u(j) - u(j-1)\bigg)\cdot U_j(X)
+
u(1) \cdot U_1(X)
\end{align*}
so every additive function $u$ is a linear combination of the functions $U_k$. Note that all coefficients in this linear combination are non-negative. Hence:

\begin{itemize}
\item If $\forall k\range{1}{m}$: $U_k(X)\geq U_k(Y)$, then $\forall u\in\mathcal{U}(\succ): u(X)\geq u(Y)$. This implies (a).
\item If $\forall k\range{1}{m}$: $U_k(X) < U_k(Y)$, then $\forall u\in\mathcal{U}(\succ): u(X)<u(Y)$. This implies (b).
\qedhere
\end{itemize}
\end{proof}

\section{NP-hardness of NDDEF}
\label{app:proofs}

\noindent
\textbf{Theorem \ref{thm:nddef-npcomplete}.} 
When there are $n\geq 3$ agents and at least $2 n$ items, checking the existence of NDDEF allocations is NP-hard (as a function of $n$).
\begin{proof}
The proof is similar to the proof of \citet{Bouveret2010Fair} for the NP-hardness of checking existence of NecEF allocations.
We now present their reduction and show that it works for NDDEF as well.

The proof is by reduction from the \emph{exact-3-cover} problem, whose inputs are:
\begin{itemize}
\item A base set of $3 q$ elements;
\item A set-family containing $n\geq q$ triplets, $C_1,\ldots,C_n$, each of which contains exactly 3 elements from the base-set.
\end{itemize}
The question is whether there exist $q$ pairwise-disjoint triplets whose union is the base-set.
Given an instance of exact-3-cover, an instance of fair item allocation is constructed as follows:
\begin{itemize}
\item To the $3 q$ base elements correspond $3 q$ \emph{main items}, denoted by $Main$. 
To each triplet $C_i$ corresponds a set of three main items, denoted by $Main_i$, such that $\forall i\range{1}{n}: Main_i\subseteq Main$. The sets $Main_i$, like the triplets $C_i$, are \emph{not} necessarily disjoint. We denote by $Main_{-i}$ the main items not in $Main_i$.
\item There are also $3 n$ \emph{dummy items} denoted by $Dummy$. 
To each triplet $C_i$ corresponds a set of three dummy items, denoted by $Dummy_i:=\{d_{i},d_{i'},d_{i''}\}$. All such sets are pairwise disjoint. We denote by $Dummy_{-i}$ the dummy items not in $Dummy_i$.
\item There are $3 (n-q)$ \emph{auxiliary items}, denoted by $Aux$. They are partitioned to $n-q$ pairwise-disjoint triplets, denoted by $Aux_j:=\{x_{j},x_{j'},x_{j''}\}$, for $j\range{q+1}{n}$.
All in all, there are $6 n$ items.
\item To each triplet $C_i$ corresponds a set of three \emph{agents}, $Agents_i = \{i,i',i''\}$. The sets $Agents_i$ are pairwise disjoint. All in all, there are $3 n$ agents.
\item The preferences of the three agents in $Agents_i$ are, in general:
\begin{align*}
Dummy_i \succ Main_i \succ Aux_{q+1} \succ \cdots \succ Aux_{n} \succ Dummy_{-i} \succ Main_{-i}
\end{align*}
Their preferences over the three items in $Dummy_i$ are ``cyclic'', i.e., for agent $i$ it is $d_{i}\succ d_{i'}\succ d_{i''}$, for agent $i'$ it is $d_{i'}\succ d_{i''}\succ d_{i}$, and for agent $i''$ it is $d_{i''}\succ d_{i}\succ d_{i'}$. Their preferences over the three items in $Main_i$ are cyclic in a similar way.
Their preferences over the three items in $Aux_j$, for each $j\range{q+1}{n}$, are cyclic in a similar way.
Their preferences over $Dummy_{-i}$ and $Main_{-i}$ are arbitrary.
\end{itemize}

\citet{Bouveret2010Fair} prove that there exists a NecEF allocation iff there exists an exact-3-cover. The proof involves three arguments:
\begin{enumerate}
	\item \label{item:nef} In a NecEF allocation, each agent must receive the same number of items. Here there are $6 n$ items and $3 n$ agents so each agent must get exactly two items. One of these items must be its top dummy item, which is easy to do since the top dummy items of all agents are different. So, it remains to prove that there is an exact-3-cover, if and only if the second items can be allocated in a NecEF way, i.e., such that each agent prefers the worst item in his bundle to the worst item in any other bundle.
	\item  \label{item:cover-nef} $Cover \implies allocation$: Suppose there is an exact-3-cover, e.g, with the triplets $C_1,\ldots,C_q$. Then, the sets of main items $Main_1,\ldots,Main_q$ are pairwise-disjoint and their union is exactly $Main$. Then, for each $j\range{1}{q}$, it is possible to allocate the three items in $Main_j$ to the three agents in  $Agents_j$, giving each agent his favorite main item.
	Let's call these $3 q$ agents in the triplets $Agents_1,\ldots,Agents_q$, the ``lucky agents''.
	The allocation is NecEF for the lucky agents since their worst item is their 4th-best item while the worst item in any other bundle is at most their 5th-best item (since their three best items are the dummy items and they are already allocated).
	It remains to determine an allocation for the $3(n-q)$ ``unlucky'' agents, $Agents_{q+1},\cdots,Agents_n$. For each $j\range{q+1}{n}$, give to the three agents in $Agents_j$, the three items in $Aux_j$, giving each agent his favorite item from that triplet. 
	This item is better for them than the worst items in the other bundles, which are from  $Main_{-i}$ or $Dummy_{-i}$, so the allocation is NecEF for them too.
	\item  \label{item:nef-cover} $Allocation \implies cover$: Suppose there is a NecEF allocation. For each $i\range{1}{n}$, consider the three agents in $Agents_i$. We claim that either all of them receive a main item from $Main_i$, or none of them does. \emph{Proof:} suppose e.g. that agent $i$ receives item $m_i\in Main_i$ but agent $i'$ does not receive any item from $Main_i$. Then, the allocation of $i$ is $\{d_i,m_i\}$ and the best possible allocation for $i'$ is $\{d_{i'},x_{i'}\}$, where $x_{i'}$ is the auxiliary item preferred by agent $i'$. But for agent $i'$, both items allocated to agent $i$ are better than $x_{i'}$. Therefore agent $i'$ might envy $i$, so the division is not NecEF.
	Since each main item must be allocated to exactly one agent, there exists an exact-3-cover: the triplet $C_i$ is in the cover if and only if the agents in $Agents_i$ receive the items in $Main_i$.
\end{enumerate}
We now show that the reduction also works for NDDEF. Claim \ref{item:nef} works for NDDEF by Theorem \ref{thm:X-NDD-Y}. Claim \ref{item:cover-nef} clearly works for NDDEF since every NEF allocation is NDDEF. 
It remains to prove claim \ref{item:nef-cover}. 
Suppose there exists an NDDEF allocation.
This allocation is, in particular, envy-free according to the Borda score. 
We claim that, for each 
$i\range{1}{n}$, either all three agents in $Agents_i$ receive an item from $Main_i$, or none of them does.
\emph{Proof:} consider the following two cases:
\begin{itemize}
	\item One agent, say $i$, receives his main item $m_i$, but the other agents $i',i''$ do not receive their main items. The dummy items give agent $i''$ a Borda-advantage of 1 over $i$. The best second item that can be allocated to $i''$ is his best auxiliary item, but this leaves him with a Borda-disadvantage of 2 relative to $i$, so $i''$ Borda-envies $i$.
	\item Two agents, say $i',i''$, receive their main items $m_{i'},m_{i''}$, but agent $i$ does not receive his main item. The dummy items give agent $i$ a Borda-advantage of 1 over $i'$. The best second item that can be allocated to $i$ is his best auxiliary item, but this leaves him with a Borda-disadvantage of 2 relative to $i'$, so $i$ Borda-envies $i'$.
\end{itemize}
Therefore the reduction is valid for NDDEF too, and the theorem is proved.
\end{proof}

\newpage
\bibliographystyle{elsarticle-harv}
\bibliography{merged}

\end{document}